%% file: main.tex
\documentclass[runningheads]{llncs}

\usepackage[margin=1in]{geometry}
\usepackage{amsmath,amssymb,mathtools}
\usepackage{booktabs}
\usepackage{graphicx}
\usepackage{float}
\usepackage{algorithm}
\usepackage{xspace}
\usepackage{needspace}
\usepackage[noend]{algpseudocode}
\usepackage{flafter}
\usepackage{tikz}
\usetikzlibrary{arrows.meta,fit,positioning,calc,decorations.pathreplacing,backgrounds}
\definecolor{figblue}{HTML}{4E79A7}
\definecolor{figorange}{HTML}{F28E2B}
\definecolor{figgreen}{HTML}{59A14F}
\definecolor{figgray}{HTML}{6B7280}
\definecolor{figlightblue}{HTML}{EAF2FB}
\definecolor{figlightgreen}{HTML}{EDF7ED}
\usepackage{hyperref}
\usepackage[bibliography=common,appendix=append]{apxproof}
\usepackage{thmtools}
\usepackage{thm-restate}

\newtheoremrep{lemma}[theorem]{Lemma}
\newtheorem{observation}[lemma]{Observation}
\title{On the Metric Distortion of the Minimum Spanning Tree Problem}

\author{
  Kian Ghassemi\thanks{
    Sharif University of Technology, Tehran, Iran.
  }
  \hspace{1.5em}
  Masoud Seddighin\thanks{
    Tehran Institute for Advanced Studies (TeIAS), Khatam University, Tehran, Iran.
  }
  \hspace{1.5em}
  Saeed Seddighin\thanks{
    Independent Researcher, Chicago, USA.
  }
}

\institute{}
\newcommand{\Vertices}{V}
\newcommand{\NumVertices}{n}
\newcommand{\Weight}{w}
\newcommand{\Profile}{\pi}
\newcommand{\ConsMetrics}[1]{\mathcal{D}(#1)}

\newcommand{\Metric}{d}
\newcommand{\MST}{\textsf{MST}\xspace}
\newcommand{\cost}{\textsf{cost}}
\newcommand{\distortion}{\textsf{dist}}

\newcommand{\OPT}{\textsf{OPT}}
\newcommand{\rank}{\textsf{rank}}

\newcommand{\GB}{\textsf{gb}}
\newcommand{\LB}{\textsf{lb}}

\newcommand{\firstV}{i}
\newcommand{\secondV}{j}
\newcommand{\thirdV}{k}
\newcommand{\fourthV}{t}

\newcommand{\Component}{C}

\newcommand{\RealTree}{\mathcal{T}}
\newcommand{\RealTreeEdges}{E_{\RealTree}}

\newcommand{\Dim}{D}
\newcommand{\Spanner}{S}
\newcommand{\SpannerDist}{d_{\Spanner}}

\newcommand{\Budget}{B}

\DeclareMathOperator*{\argmin}{arg\,min}
\begin{document}

\begingroup
\makeatletter
\renewcommand{\@fnsymbol}[1]{%
  \ensuremath{%
    \ifcase#1
    \or *
    \or \dagger
    \or \ddagger
    \or \mathsection
    \or \mathparagraph
    \or \|
    \fi
  }%
}
\makeatother
\maketitle
\endgroup
\pagestyle{plain}

\setcounter{footnote}{0}
\renewcommand{\thefootnote}{\arabic{footnote}}

\begin{abstract}
We study the metric distortion of the minimum spanning tree
(\MST) problem when numerical edge weights are unavailable and
each vertex reports only a ranking of the other vertices induced
by these weights. We first show, through a simple example, that
without assumptions on the weights, no ordinal algorithm guarantees
a finite approximation ratio. We therefore focus on general metrics
and present a Bor\r{u}vka-inspired algorithm using only ordinal
comparisons that achieves distortion $O(\log n)$. We complement
this result by showing that, for every $\varepsilon>0$, no
deterministic ordinal algorithm guarantees distortion at most
$3/2-\varepsilon$ on all finite metric instances.

We then investigate more informative settings in pursuit of
constant distortion, first considering two well-studied metric
classes. For tree metrics realized on the given vertex set,
a more careful variant of our initial algorithm achieves
distortion~$1$, recovering an optimal \MST{} from ordinal information
alone. For the broad class of doubling metrics, a different
algorithm based on \emph{endpoint cut stability} achieves distortion
$2^{O(D)}$, where $D$ is the doubling dimension; this is constant
for fixed $D$, independently of $n$. Finally, we study a query-based
extension allowing comparisons between two edge lengths without
revealing their numerical values. For general metrics with $n\geq2$,
fewer than $n$ edge-comparison queries suffice to achieve distortion
at most $2\log_2^* n-1$, which is near-constant even for very large inputs.
\end{abstract}

\input{Files/introduction}

\input{Files/Our-Result}

\input{Files/Further-Related-Work}

\input{Files/Preliminaries}
\input{Files/Lower-bound}
\input{Files/MST-general}
\input{Files/MST-tree}
\input{Files/MST-doubling}
\input{Files/Query}

\input{Files/Conclusion}

\bibliographystyle{plain}
\bibliography{refs}

\end{document}

%% file: Files/introduction.tex
\section{Introduction}

Distortion is a fundamental and well-established concept in social choice theory
\cite{procaccia2006distortion,anshelevich2021distortion,anshelevich2018approximating,anshelevich2017randomized,goel2017metric,gkatzelis2020resolving,kizilkaya2022plurality,kempe2020communication}. Roughly speaking, it measures the efficiency loss incurred when solving an optimization problem using only ordinal comparisons rather than the underlying cardinal values.

Although distortion was originally introduced to analyze voting rules~\cite{procaccia2006distortion}, the underlying paradigm of optimization from ordinal information has since found applications well beyond social choice. In particular, ordinal algorithms have been developed for matching, clustering, and other combinatorial optimization problems~\cite{anshelevich2015blind,anshelevich2016truthful,anari2023distortion};
 see~\cite{anshelevich2021distortion} for a broad overview of distortion in social choice. In this paper, we extend the study of distortion to the minimum spanning tree (\MST) problem.

Let us first describe the problem with an example. Consider the complete weighted graph shown in Figure~\ref{fig:intro-example}. Suppose the goal is to find a minimum spanning tree  of this graph. One can easily check that the total weight of the MST is 45; the edges of the \MST are highlighted in green. 

If all the weights are given, then finding the \MST is easy using classical algorithms like Kruskal's or Prim's~\cite{kruskal1956shortest,prim1957shortest}. However, suppose that instead of having the exact weights, we only have access to the rankings of the vertices. That is, for each vertex, we know the order of the other vertices by increasing incident edge weight, but not the actual numerical values. For our example, these rankings are:
\[
\begin{array}{rcl}
	1 &:& 4 \succ 2 \succ 3 \succ 5,\\
	2 &:& 1 \succ 4 \succ 3 \succ 5,\\
	3 &:& 1 \succ 2 \succ 4 \succ 5,\\
	4 &:& 1 \succ 2 \succ 3 \succ 5,\\
	5 &:& 1 \succ 2 \succ 4 \succ 3.
\end{array}
\]
The question is whether the \MST can be uniquely determined from such ordinal information alone.

\begin{figure}[t]
	\centering
	\begin{tikzpicture}[scale=0.8]
		% Vertices
		\node[circle, draw, fill=white, inner sep=2pt, font=\bfseries] (1) at (0, 2.5) {1};
		\node[circle, draw, fill=white, inner sep=2pt, font=\bfseries] (2) at (2.5, 0.5) {2};
		\node[circle, draw, fill=white, inner sep=2pt, font=\bfseries] (3) at (1.5, -2) {3};
		\node[circle, draw, fill=white, inner sep=2pt, font=\bfseries] (4) at (-1.5, -2) {4};
		\node[circle, draw, fill=white, inner sep=2pt, font=\bfseries] (5) at (-2.5, 0.5) {5};

		% Non-MST edges
		\draw[figgray] (2) -- node[below right, fill=white, inner sep=1.5pt] {20} (3);
		\draw[figgray] (2) -- node[below right, fill=white, inner sep=1.5pt] {18} (4);
		\draw[figgray] (2) -- node[above, fill=white, inner sep=1.5pt] {25} (5);
		\draw[figgray] (3) -- node[below, fill=white, inner sep=1.5pt] {22} (4);
		\draw[figgray] (3) -- node[below left, fill=white, inner sep=1.5pt] {30} (5);
		\draw[figgray] (4) -- node[below left, fill=white, inner sep=1.5pt] {28} (5);

		% MST edges — total weight = 45
		\draw[figgreen, line width=0.5pt] (1) -- node[above right, fill=white, inner sep=0.5pt] {10} (2);
		\draw[figgreen, line width=0.5pt] (1) -- node[below left, fill=white, inner sep=0.5pt] {12} (3);
		\draw[figgreen, line width=0.5pt] (1) -- node[below right, fill=white, inner sep=0.5pt] {8} (4);
		\draw[figgreen, line width=0.5pt] (1) -- node[above left, fill=white, inner sep=0.5pt] {15} (5);
	\end{tikzpicture}
	\caption{A complete weighted graph. The \MST edges are highlighted in green.}
	\label{fig:intro-example}
\end{figure}

Unfortunately, the answer is \emph{no}. Figure~\ref{fig:intro-example-2} shows two weighted graphs that induce exactly the same ordinal information. At every vertex, the neighboring vertices are ranked identically:
\[
\begin{array}{rcl}
	1 &:& 2 \succ 4 \succ 3,\\
	2 &:& 1 \succ 3 \succ 4,\\
	3 &:& 4 \succ 2 \succ 1,\\
	4 &:& 3 \succ 1 \succ 2.
\end{array}
\]

Consequently, any algorithm that has access only to these rankings cannot distinguish between the two instances.
However,
 despite having identical ordinal information, the graphs have different minimum spanning trees. In Instance A, the unique \MST is
$
\{(1,2), (2,3), (3,4)\},
$
whereas in Instance B, the unique \MST is
$
\{(1,2), (1,4), (3,4)\}.$

\begin{figure}[t]
\centering
\begin{tikzpicture}[
 vertex/.style={
   circle,draw=figgray,fill=white,
   minimum size=6mm,inner sep=1pt
 },
 fixed/.style={draw=figblue,line width=1.1pt},
 cheap/.style={draw=figgreen,line width=1.2pt},
 expensive/.style={draw=figgray,dashed},
 every node/.style={font=\small}
]
\begin{scope}
\node[vertex] (1) at (0,1.8) {$1$};
\node[vertex] (2) at (2.5,1.8) {$2$};
\node[vertex] (3) at (2.5,0) {$3$};
\node[vertex] (4) at (0,0) {$4$};

\draw[fixed] (1)--node[above] {$1$} (2);
\draw[fixed] (3)--node[below] {$1$} (4);
\draw[cheap] (2)--node[right] {$2$} (3);
\draw[expensive] (4)--node[left] {$99$} (1);
\draw[expensive] (1)--node[pos=.25,fill=white,inner sep=1pt] {$100$} (3);
\draw[expensive] (2)--node[pos=.25,fill=white,inner sep=1pt] {$100$} (4);

\node at (1.25,-.8) {(a) Instance A};
\node[font=\footnotesize] at (1.25,-1.3)
  {\MST: $\{(1,2),(2,3),(3,4)\}$};
\end{scope}

\begin{scope}[xshift=6cm]
\node[vertex] (1) at (0,1.8) {$1$};
\node[vertex] (2) at (2.5,1.8) {$2$};
\node[vertex] (3) at (2.5,0) {$3$};
\node[vertex] (4) at (0,0) {$4$};

\draw[fixed] (1)--node[above] {$1$} (2);
\draw[fixed] (3)--node[below] {$1$} (4);
\draw[expensive] (2)--node[right] {$99$} (3);
\draw[cheap] (4)--node[left] {$2$} (1);
\draw[expensive] (1)--node[pos=.25,fill=white,inner sep=1pt] {$100$} (3);
\draw[expensive] (2)--node[pos=.25,fill=white,inner sep=1pt] {$100$} (4);

\node at (1.25,-.8) {(b) Instance B};
\node[font=\footnotesize] at (1.25,-1.3)
  {\MST: $\{(1,2),(1,4),(3,4)\}$};
\end{scope}
\end{tikzpicture}
\caption{Two instances with identical vertex rankings but different
unique \MST{}s. In each instance, the blue and green edges form the \MST;
the remaining edges are dashed gray.}
\label{fig:intro-example-2}
\end{figure}

Note that the example in Figure~\ref{fig:intro-example-2} also highlights an additional difficulty. The edge weights can be modified; for instance, the values $99$ and $100$ can be replaced by arbitrarily large numbers without changing the induced vertex rankings. This means that the ordinal information remains the same even though the scale of the edge weights can change dramatically.
By increasing this gap, the difference between the weight of the optimal minimum spanning tree and that of the tree returned by any ordinal-based algorithm can be made arbitrarily large. As a consequence, even approximate optimality cannot be guaranteed when only ordinal information is available.
Therefore, additional structural assumptions on the edge weights are necessary in order to obtain meaningful approximation guarantees.

A natural and widely studied assumption that can make the problem tractable is that the vertices are embedded in a metric space, meaning that the edge weights satisfy the triangle inequality. This assumption has been extensively studied in the context of social choice theory, where it gives rise to the notion of \emph{metric distortion}~\cite{anshelevich2018approximating,anshelevich2017randomized,munagala2019improved,gkatzelis2020resolving,kizilkaya2022plurality,goel2017metric,kempe2020communication,charikar2022metric}.
Note that the example in Figure~\ref{fig:intro-example-2} does not satisfy the triangle inequality.

Motivated by this, the central question we study in this paper is:
\begin{quote}
	Can we compute a minimum spanning tree (or a good approximation of it) using only ordinal information when the underlying edge weights form a metric?
\end{quote}

%% file: Files/Our-Result.tex
\section{Our Results}
 We begin by proving a lower bound, showing that every deterministic ordinal algorithm has distortion at least  $\frac{3}{2}-o(1)$ as the number of vertices grows.
 
\begin{restatable}{restheorem}{mstlower}
\label{thm:mst-lower}
For every $\varepsilon>0$, no deterministic ordinal algorithm
for the minimum spanning tree problem guarantees distortion
at most $\frac{3}{2}-\varepsilon$ on all finite metric instances.
\end{restatable}

Recall that a lower bound of $3$ is known for deterministic
single-winner voting under metric preferences~\cite{anshelevich2018approximating}.
Our lower-bound proof follows the standard indistinguishability framework. We construct several metric instances that induce identical ordinal preferences but have different optimal minimum spanning trees.
Starting with an example for $n=4$, we recursively combine two smaller
instances to obtain an instance on $n=2^h$ vertices, using consecutive
Fibonacci numbers to define the distances at each level.
Any deterministic ordinal algorithm must output the same tree on all
instances with the same preferences, and at least one instance forces
its distortion to approach $\frac{3}{2}$ as $n$ grows.
This establishes Theorem~\ref{thm:mst-lower}.

We next present Algorithm~\ref{alg:mst-general} and show that it achieves an $O(\log n)$ distortion for the minimum spanning tree problem in metric spaces.

\begin{restatable}{restheorem}{mstgeneral}
\label{thm:mst-general}
Algorithm~\ref{alg:mst-general} is an ordinal algorithm for the minimum spanning tree problem with distortion $O(\log n)$ on general metrics.
\end{restatable}

Algorithm~\ref{alg:mst-general} is inspired by the contraction framework of \textsc{Borůvka}~\cite{boruuvka1926jistem} (see \cite{nevsetvril2001otakar} for an English translation). Initially, each vertex forms its own component. In every round, each component proposes an edge connecting it to another component. An acyclic subset of the proposed edges is then selected, added to the solution, and the resulting connected components are contracted. The process repeats until only a single component remains.

In Borůvka's algorithm, each component proposes its cheapest outgoing edge. Since our algorithm has access only to ordinal preferences rather than the actual edge weights, identifying the cheapest outgoing edge is impossible. Instead, we design a rule that carefully selects an outgoing edge using only the ordinal rankings.
The key ingredient of our analysis is a charging argument. We show that the total weight of the edges added in any single contraction round is at most a constant factor of the weight of an optimal minimum spanning tree. Moreover, every round reduces the number of components by a constant factor. Consequently, the algorithm terminates after $O(\log n)$ rounds, yielding an overall metric distortion of $O(\log n)$.

The significant gap between our lower and upper bounds naturally raises the question of whether better distortion guarantees can be achieved under more restricted metric assumptions. We investigate this question for two important classes of structured metrics: tree metrics and metrics of bounded doubling dimension.

We first consider tree metrics, which provide a canonical and highly structured setting in metric algorithms and have been extensively used as approximations of general metrics~\cite{bartal1996probabilistic,bartal1998approximating,fakcharoenphol2003tight}. In this case, the vertices are embedded in a weighted tree, and the distance between any two vertices is the length of the unique path connecting them in the tree. We then turn to metrics of bounded doubling dimension, which impose
a different structural restriction from tree metrics while still
allowing guarantees independent of the number of vertices.

For tree metrics, we obtain a tight upper bound. We show that a simple modification of our contraction-based algorithm constructs a spanning tree whose cost matches that of the optimal minimum spanning tree. Consequently, the algorithm achieves distortion \(1\), which is optimal.

\begin{restatable}{restheorem}{msttree}
\label{thm:mst-tree}
There is an ordinal algorithm for the minimum spanning tree problem with
distortion $1$ on tree metrics realized on the given vertex set.
\end{restatable}

Our algorithm still follows the same contraction framework, but it slightly changes how proposed edges are chosen. When a component proposes an edge to a vertex outside the component, we redirect the proposal to a better endpoint inside the component, using only the ordinal rankings. 
This redirection is the key idea that makes the analysis work. Although the algorithm does not know the hidden tree or the actual distances, we can show that every positive-cost edge added by the algorithm corresponds to a different positive edge of the underlying tree metric. The existence of zero-cost edges makes the claim nontrivial,
requiring us to analyze different cases. The algorithm
need not reproduce the exact edge set of an optimal \MST,
as its output may contain different zero-cost edges, but these
do not affect the total cost. Therefore, the returned spanning
tree has exactly the cost of an optimal \MST, giving distortion $1$.

We next consider metrics of bounded doubling dimension. A metric
has doubling dimension at most $\Dim$ if every ball of radius $r$
can be covered by at most $2^{\Dim}$ balls of radius $r/2$.
This class includes fixed-dimensional Euclidean metrics and is
incomparable with tree metrics: tree metrics can have arbitrarily
large doubling dimension, while bounded-doubling metrics need not
admit a tree representation. We show that bounded doubling dimension
allows distortion guarantees independent of $n$.

\begin{restatable}{restheorem}{mstdoubling}
\label{thm:mst-doubling}
There is an ordinal algorithm for the minimum spanning tree problem
with distortion $2^{O(\Dim)}$ on metrics of doubling dimension $\Dim$.
\end{restatable}

Our approach uses a spanning-tree property called
\emph{endpoint cut stability}, which requires that, when removing
any tree edge splits the vertices into two sets, each endpoint
prefers the other endpoint to every other vertex in the opposite set. Algorithm~\ref{alg:mst-doubling} constructs such a tree
from the ordinal profile alone using a Kruskal-style procedure.
We compare its cost with that of a light $(1+\epsilon)$-spanner:
a graph whose shortest-path distances approximate the metric
distances within a factor of $1+\epsilon$, and whose total edge
weight is bounded relative to the \MST cost.
Using the spanner construction of Borradaile, Le, and
Wulff-Nilsen~\cite{BorradaileLeWulffNilsen2019}, we obtain distortion
$2^{O(\Dim)}$, which is constant for every fixed $\Dim$.

Finally, we return to general metrics and allow a limited amount
of information beyond the ordinal profile. Motivated by the study
of information-distortion tradeoffs through cardinal value queries
by Amanatidis et al.~\cite{amanatidis2022few}, we ask how much
additional information suffices to improve distortion.
Interestingly, our algorithm never needs to learn a numerical
distance; it uses only \emph{edge-comparison queries}, each asking
whether $d(\firstV,\secondV)\le d(\thirdV,\fourthV)$.
Comparisons between edges sharing an endpoint are already available
from the ordinal profile, so additional queries are needed only
for edges with no common endpoint.

Our algorithm proceeds through a sequence of Bor\r{u}vka phases
on progressively smaller sets of representatives. Within each
phase, ordinal rankings identify the nearest outgoing neighbor
of each representative, and edge comparisons determine the
cheapest outgoing edge of each current component. We perform
several rounds on the same representative set before selecting
new representatives for the merged components. As the number
of representatives decreases, we can afford increasingly many
rounds per phase while keeping the total number of queries linear.
Each phase adds only a constant multiple of the optimal \MST cost,
and the phase schedule ensures termination after
$O(\log^{\ast} n)$ phases.

Thus, only $O(n)$ additional comparisons suffice to achieve
$O(\log^{\ast} n)$ distortion, without access to cardinal values.
Here $\log^{\ast} n$ is the number of times the logarithm must
be applied before the result is at most one; it grows so slowly
that it remains a small constant even for extraordinarily large
values of $n$.

\begin{restatable}{restheorem}{mstquery}
\label{thm:mst-query}
There is a deterministic algorithm for the minimum spanning tree
problem that, given the ordinal profile and using a total of
$O(n)$ additional edge-comparison queries, achieves distortion
$O(\log^{\ast} n)$ on general metrics.
\end{restatable}

%% file: Files/Further-Related-Work.tex
\section{Further Related Work}

The notion of \emph{distortion} was introduced by Procaccia and Rosenschein~\cite{procaccia2006distortion} and has since become a central concept in computational social choice.  Most of this literature focuses on selecting one or more alternatives based solely on these rankings~\cite{caragiannis2011voting,boutilier2012optimal,anshelevich2017randomized,anshelevich2018approximating,goel2017metric,munagala2019improved,gkatzelis2020resolving,anshelevich2021distortion,caragiannis2022metric}.

A substantial body of work has developed the metric framework for voting rules, analyzing how well they approximate optimal social cost under ordinal information~\cite{caragiannis2011voting,boutilier2012optimal,caragiannis2022metric}. We refer to the survey by Anshelevich, Filos-Ratsikas, Shah, and Voudouris~\cite{anshelevich2021distortion} for a comprehensive overview. This line of research has established tight distortion bounds, lower bounds, and fairness guarantees for both deterministic and randomized mechanisms~\cite{anshelevich2018approximating,anshelevich2017randomized,goel2017metric,munagala2019improved,gkatzelis2020resolving,kizilkaya2022plurality}. More recent work has further expanded the framework by considering settings with restricted or partial preference information, limited preference elicitation, distributed information, and average-case analyses~\cite{anagnostides2022metric,anshelevich2022distortion,amanatidis2021peeking,amanatidis2024don,amanatidis2022few,caragiannis2024beyond,ebadian2025every}.

Beyond voting, the distortion framework has also been extended to other optimization problems, including clustering and matching. In this broader setting—often referred to as ordinal approximation—algorithms again operate only on preference rankings induced by an underlying metric. Anshelevich and Sekar~\cite{anshelevich2015blind} initiated this direction by studying matching and clustering with purely ordinal information, and further explored truthful mechanisms in related ordinal settings~\cite{anshelevich2016truthful}. Abramowitz and Anshelevich~\cite{abramowitz2018utilitarians} studied ordinal approximation for graph problems with maximization objectives. Their framework
applies to several maximum-utility graph problems and implies, in particular, a
$2$-approximation for Maximum Spanning Tree using only ordinal information. Subsequent work has considered query-enhanced models, where algorithms may access a limited number of cardinal value queries beyond rankings, for matching and related combinatorial problems~\cite{amanatidis2022few,amanatidis2024don}. These results are closely related in spirit to ours, as they investigate what can be achieved from ordinal information, possibly augmented with minimal extra queries.

Recently, Anari, Charikar, and Ramakrishnan~\cite{anari2023distortion} study bipartite metric matching with $n$ agents and $n$ items, where each agent ranks items according to their metric distances. They prove an $O(n^2)$ upper bound for deterministic mechanisms, as well as an $\Omega(\log n)$ lower bound for any ordinal mechanism. In a different direction, Filos-Ratsikas, Gkatzelis, Latifian, Rewinski, and Voudouris~\cite{filos2025optimal} exploit the structure of the real line to obtain stronger guarantees for matching in one-dimensional metrics. Filos-Ratsikas and Kalantzis~\cite{filos2026distortion} further study distortion in stable matching settings, where the outcome is additionally required to satisfy stability constraints.

There is also a connection to classical approximation algorithms.
The $O(\log n)$ distortion bound we obtain for general metrics
also follows from the nearest-neighbor heuristic for metric
TSP~\cite{rosenkrantz1977analysis}.
This heuristic uses only ordinal information; removing its
closing edge yields a spanning tree, and the optimal tour
costs at most twice the optimal \MST.
Its logarithmic worst-case bound persists even in the
Euclidean plane~\cite{hougardy2015nearest}, and it need not
be optimal even on star metrics.\footnote{On a star with
$n-1$ unit-length edges, starting at the center gives an
NNH spanning path of cost $2n-3$, while the optimal \MST{}
costs $n-1$.}
Our Bor\r{u}vka-style framework supports extensions achieving
exact optimality on tree metrics realized on the given vertex
set and near-constant distortion with linear query complexity.
Whether it can also improve the logarithmic bound for general
metrics remains open.

%% file: Files/Preliminaries.tex
% ============================================================
\section{Preliminaries}
% ============================================================
Let $G=(\Vertices,\Metric)$ be a complete weighted graph, where $\Vertices$ is a finite set of $\NumVertices$ vertices and $\Metric$ is a weight function that assigns a nonnegative weight to every edge of the complete graph. In this paper, the edge weights are assumed to be induced by a metric, that is, the weights satisfy the following properties:
\begin{itemize}
    \item \emph{Reflexivity:} $\Metric(\firstV,\firstV)=0$.
    \item \emph{Symmetry:} $\Metric(\firstV,\secondV)=\Metric(\secondV,\firstV)$.
    \item \emph{Triangle inequality:} $\Metric(\firstV,\thirdV)\le \Metric(\firstV,\secondV)+\Metric(\secondV,\thirdV)$.
\end{itemize}
 We allow $\Metric(\firstV,\secondV)=0$ for distinct elements $\firstV\neq \secondV$; thus, throughout the paper, we use the term metric in this relaxed sense.

%%% part 2 
Our goal is to compute a minimum spanning tree of 
$G=(\Vertices,\Metric)$. However, the algorithm does not have access to the
metric $\Metric$. Instead, it is given only the \emph{ordinal profile}
$
\Profile(\Metric)=\bigl(\pi_{\firstV}(\Metric)\bigr)_{\firstV\in\Vertices},
$
that is consistent with $\Metric$. Here, for each vertex $\firstV\in\Vertices$, $\pi_{\firstV}(\Metric)$ is a strict ranking of the vertices in
$\Vertices\setminus\{\firstV\}$ that is consistent with their distances from
$\firstV$.
More precisely, consistency with $\Metric$ means that, for any two distinct vertices
$\secondV,\thirdV\in\Vertices\setminus\{\firstV\}$,
$\secondV \succ_{\firstV} \thirdV
$ implies $
\Metric(\firstV,\secondV)
\leq \Metric(\firstV,\thirdV),$
where $\secondV \succ_{\firstV} \thirdV$ means that $\secondV$ appears before
$\thirdV$ in the ranking $\pi_{\firstV}(\Metric)$. 
We denote by
$\rank_{\firstV}(\secondV)$\footnote{Formally, $\rank_{\firstV}(\secondV)$ is induced by the metric $\Metric$ through the ranking $\pi_{\firstV}(\Metric)$. When the metric is clear from context, we write $\rank_{\firstV}(\secondV)$  instead of $\rank^{\Metric}_{\firstV}(\secondV)$.}  the position of $\secondV$ in
$\pi_{\firstV}(\Metric)$ (smaller rank indicates a higher preference).

For a subset $S$ of edges, the cost of $S$ is defined as 
\[
\cost_{\Metric}(S)
=
\sum_{(\firstV,\secondV)\in S} \Metric(\firstV,\secondV).
\]

Given an algorithm $A$ for our problem and an ordinal profile $\Profile$, we
denote by $A(\Profile)$ the spanning tree returned after running $A$ on profile
$\Profile$. The distortion of $A$ on profile $\Profile$ is defined as
$$ \distortion(A,\Profile) = \sup_{\substack{\Metric\in\ConsMetrics{\Profile}}}
\frac{
\cost_{\Metric}\bigl(A(\Profile)\bigr)
}{
\cost_{\Metric}\bigl(\MST_{G}\bigr)
},
$$
where $\ConsMetrics{\Profile}$ denotes the set of all metrics consistent with $\Profile$,
and $\MST_{G}$ denotes a minimum spanning tree of $G = (V,d)$. The distortion of algorithm $A$ is defined as its worst-case distortion over all
ordinal profiles, that is,
$$\distortion(A) =\sup_{\Profile} \distortion(A,\Profile).$$

Finally, in the following we define two notions that are frequently used in our analysis.
\begin{definition}
Let $\Component\subsetneq \Vertices$ be a component, and let $\firstV\in\Component$. The globally best outside vertex of $\firstV$ with respect to $\Component$ is
\[
\GB_{\Component}(\firstV)
=
\argmin_{\secondV\in \Vertices\setminus \Component} \rank_{\firstV}(\secondV).
\]
That is, $\GB_{\Component}(\firstV)$ is the highest-ranked vertex of $\firstV$ outside its current component.
\end{definition}

\begin{definition}
Let $\Component\subseteq \Vertices$ be a component, and let $\firstV\notin\Component$. The locally best vertex of $\firstV$ in $\Component$ is
\[
\LB_{\Component}(\firstV)
=
\argmin_{\secondV\in\Component} \rank_{\firstV}(\secondV).
\]
That is, $\LB_{\Component}(\firstV)$ is the highest-ranked vertex of $\firstV$ among the vertices of $\Component$.
\end{definition}

%% file: Files/Lower-bound.tex
\section{Lower Bound}
\label{sec:mst-lower-bound}

We show that, for every $\varepsilon>0$, no deterministic ordinal
algorithm guarantees distortion at most $3/2-\varepsilon$ on all
finite metric instances. We begin with two metrics on four vertices
that give the same ordinal rankings but have different optimal
spanning trees. Repeating this idea recursively on $n=2^h$ vertices,
with distances drawn from the Fibonacci sequence, gives a lower
bound approaching $3/2$.

Consider the metrics $d_1$ and $d_2$ in
Figure~\ref{fig:mst-lower-bound-core}. In both, the edges
$(a,b)$ and $(c,d)$ have length $1$. In $d_1$, the edge $(a,c)$
also has length $1$, while the remaining edges have length $2$.
In $d_2$, the edge $(b,d)$ has length $1$ instead.
Both metrics satisfy the triangle inequality and are compatible
with the rankings
\[
a: b\succ c\succ d,\qquad
b: a\succ d\succ c,\qquad
c: d\succ a\succ b,\qquad
d: c\succ b\succ a,
\]
where equal distances are ordered as displayed.
Each metric has a unique MST of cost $3$, but these trees differ.
Every other spanning tree has cost at least $4$.
A deterministic ordinal algorithm must return the same tree
on both metrics, so its distortion is at least $4/3$ on one of them.

\begin{figure}[t]
\centering
\begin{tikzpicture}[
 vertex/.style={
   circle,draw=figgray,fill=white,
   minimum size=6mm,inner sep=1pt
 },
 fixed/.style={draw=figblue,line width=1.1pt},
 cheap/.style={draw=figgreen,line width=1.2pt},
 expensive/.style={draw=figgray,dashed},
 every node/.style={font=\small}
]
\begin{scope}
\node[vertex] (a) at (0,1.8) {$a$};
\node[vertex] (b) at (0,0) {$b$};
\node[vertex] (c) at (2.5,1.8) {$c$};
\node[vertex] (d) at (2.5,0) {$d$};
\draw[fixed] (a)--node[left] {$1$} (b);
\draw[fixed] (c)--node[right] {$1$} (d);
\draw[cheap] (a)--node[above] {$1$} (c);
\draw[expensive] (b)--node[below] {$2$} (d);
\draw[expensive] (a)--node[pos=.25,fill=white,inner sep=1pt] {$2$} (d);
\draw[expensive] (b)--node[pos=.25,fill=white,inner sep=1pt] {$2$} (c);
\node at (1.25,-.8) {Metric $d_1$};
\end{scope}
\begin{scope}[xshift=6cm]
\node[vertex] (a) at (0,1.8) {$a$};
\node[vertex] (b) at (0,0) {$b$};
\node[vertex] (c) at (2.5,1.8) {$c$};
\node[vertex] (d) at (2.5,0) {$d$};
\draw[fixed] (a)--node[left] {$1$} (b);
\draw[fixed] (c)--node[right] {$1$} (d);
\draw[expensive] (a)--node[above] {$2$} (c);
\draw[cheap] (b)--node[below] {$1$} (d);
\draw[expensive] (a)--node[pos=.25,fill=white,inner sep=1pt] {$2$} (d);
\draw[expensive] (b)--node[pos=.25,fill=white,inner sep=1pt] {$2$} (c);
\node at (1.25,-.8) {Metric $d_2$};
\end{scope}
\end{tikzpicture}
\caption{Two metrics compatible with the same ordinal rankings.
In each panel, the blue and green edges form the unique MST,
of cost $3$. Dashed gray edges have length $2$.}
\label{fig:mst-lower-bound-core}
\end{figure}

The same uncertainty can be repeated within larger sets of vertices.
Fix $h\ge2$, let $n=2^h$, and label the vertices $0,\ldots,n-1$.
Group consecutive vertices into pairs, combine consecutive pairs
into sets of four, then combine these into sets of eight,
and continue until all vertices belong to one set.
We call each set formed in this process a \emph{block};
its size is its number of vertices.
Thus every block of size at least four consists of two
consecutive blocks of equal size.
Figure~\ref{fig:lb-hierarchy} illustrates this arrangement for $n=8$.

Let $F_1=F_2=1$ and $F_{\ell+1}=F_\ell+F_{\ell-1}$ for $\ell\ge2$.
Assign distance $1$ within each pair.
Now consider a block $C$ of size $2^\ell$, where $\ell\ge2$.
Write its two halves as $A,B$, and split these into
$A_0,A_1$ and $B_0,B_1$, respectively, in increasing order
of vertex labels. For a block of size four,
these four smaller sets are singletons.

At $C$, choose $s_C\in\{0,1\}$.
Edges between $A_{s_C}$ and $B_{s_C}$ receive length $F_\ell$,
while all other edges between $A$ and $B$ receive length $F_{\ell+1}$.
Thus, for $u\in A$ and $v\in B$,
\[
d_s(u,v)=
\begin{cases}
F_\ell, & u\in A_{s_C}\text{ and }v\in B_{s_C},\\
F_{\ell+1}, & \text{otherwise}.
\end{cases}
\]
Here $s$ records the choices made at all blocks of size
at least four. Distances within each half are determined
by the same rule at smaller blocks.
Together with symmetry and $d_s(u,u)=0$, this specifies every
distance: the distance between two distinct vertices is assigned
at the smallest block containing both.

\begin{figure}[t]
\centering
\begin{tikzpicture}[
 x=1cm,y=1cm,
 block/.style={
   draw=figblue,fill=figlightblue,
   rounded corners=2pt,inner sep=5pt
 },
 pair/.style={
   draw=figgray,fill=white,
   rounded corners=2pt,inner sep=4pt
 },
 link/.style={draw=figgray,line width=.7pt},
 every node/.style={font=\small}
]
\node[block,draw=figgreen,fill=figlightgreen]
 (root) at (0,0) {$C=\{0,\ldots,7\}$};
\node[block] (a) at (-2.7,-1.25) {$A=\{0,1,2,3\}$};
\node[block] (b) at (2.7,-1.25) {$B=\{4,5,6,7\}$};
\node[pair] (a0) at (-4,-2.5) {$A_0=\{0,1\}$};
\node[pair] (a1) at (-1.4,-2.5) {$A_1=\{2,3\}$};
\node[pair] (b0) at (1.4,-2.5) {$B_0=\{4,5\}$};
\node[pair] (b1) at (4,-2.5) {$B_1=\{6,7\}$};
\draw[link]
 (root)--(a) (root)--(b)
 (a)--(a0) (a)--(a1)
 (b)--(b0) (b)--(b1);
\foreach \name/\xx/\ll/\rr in {
 a0/-4/0/1,a1/-1.4/2/3,b0/1.4/4/5,b1/4/6/7
}{
 \node (l\name) at (\xx-.45,-3.45) {$\ll$};
 \node (r\name) at (\xx+.45,-3.45) {$\rr$};
 \draw[link] (\name)--(l\name) (\name)--(r\name);
}
\end{tikzpicture}
\caption{The hierarchy for $n=8$.
At the largest block, either the edges between $A_0$ and $B_0$
or those between $A_1$ and $B_1$ receive length $2$.
The remaining edges between $A$ and $B$ receive length $3$.
The choices inside $A$ and $B$ determine their internal distances.}
\label{fig:lb-hierarchy}
\end{figure}

For $n=8$, consider choosing $s_C=1$ at all three blocks
of size at least four.
The pair edges $(0,1),(2,3),(4,5),$ $(6,7)$ have length $1$.
Within $\{0,1,2,3\}$, the edge $(1,3)$ has length $1$,
and the other edges between $\{0,1\}$ and $\{2,3\}$
have length $2$. Similarly, $(5,7)$ has length $1$,
and the other edges between $\{4,5\}$ and $\{6,7\}$
have length $2$.
Between the two blocks of size four, the edges
$(2,6),(2,7),(3,6),(3,7)$ have length $2$,
and all remaining edges have length $3$.

The two trees in Figure~\ref{fig:lb-eight-trees} illustrate
how these choices affect the cost.
Adding $(1,3),(5,7),(2,6)$ to the four pair edges gives cost
$4+1+1+2=8$.
Adding $(0,2),(4,6),(0,4)$ instead gives cost
$4+2+2+3=11$.
We will show that the first tree is optimal and that,
for every fixed output tree, some choice of $s$ forces
cost at least $11$.

\begin{figure}[t]
\centering
\begin{tikzpicture}[
 vertex/.style={
   circle,draw=figgray,fill=white,
   minimum size=5mm,inner sep=1pt,font=\scriptsize
 },
 fixed/.style={draw=figblue,line width=1pt},
 good/.style={draw=figgreen,line width=1.1pt},
 bad/.style={draw=figorange,line width=1.1pt},
 every node/.style={font=\small}
]
\begin{scope}
\foreach \v/\x/\y in {
 0/0/1.8,1/0/0,2/1.2/1.8,3/1.2/0,
 4/2.8/1.8,5/2.8/0,6/4/1.8,7/4/0
}
 \node[vertex] (v\v) at (\x,\y) {$\v$};
\foreach \a/\b in {0/1,2/3,4/5,6/7}
 \draw[fixed] (v\a)--(v\b);
\draw[good] (v1)--node[below] {$1$} (v3);
\draw[good] (v5)--node[below] {$1$} (v7);
\draw[good] (v2) to[bend left=32] node[above] {$2$} (v6);
\node at (2,-.85) {MST: cost $8$};
\end{scope}
\begin{scope}[xshift=6cm]
\foreach \v/\x/\y in {
 0/0/1.8,1/0/0,2/1.2/1.8,3/1.2/0,
 4/2.8/1.8,5/2.8/0,6/4/1.8,7/4/0
}
 \node[vertex] (v\v) at (\x,\y) {$\v$};
\foreach \a/\b in {0/1,2/3,4/5,6/7}
 \draw[fixed] (v\a)--(v\b);
\draw[bad] (v0)--node[below] {$2$} (v2);
\draw[bad] (v4)--node[below] {$2$} (v6);
\draw[bad] (v0) to[bend left=32] node[above] {$3$} (v4);
\node at (2,-.85) {A spanning tree of cost $11$};
\end{scope}
\end{tikzpicture}
\caption{Two spanning trees in the same metric on eight vertices.
Each blue edge has length $1$.
The green and orange edges show the additional connections
and their lengths.}
\label{fig:lb-eight-trees}
\end{figure}

Two properties make the construction useful.
Every choice of $s$ must give a valid metric, and the rankings
must reveal none of these choices.
The Fibonacci recurrence ensures the first property:
the difference between the two possible distances across
a split equals the minimum distance between the halves
at the preceding split.

\begin{lemma}
\label{lem:lb-metric}
Every $d_s$ is a metric.
\end{lemma}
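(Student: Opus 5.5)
Reflexivity, symmetry and nonnegativity hold by construction, so the plan is to check only the triangle inequality $d_s(x,z)\le d_s(x,y)+d_s(y,z)$ for distinct $x,y,z$. The natural tool is the block hierarchy. For two distinct vertices $u,v$, let $\lambda(u,v)$ be the level $\ell$ of the smallest block containing both, so that block has size $2^\ell$; pairs have level $1$. By construction
\[
F_{\ell}\le d_s(u,v)\le F_{\ell+1}\quad\text{for level }\ell\ge2,
\]
and $d_s=1=F_1=F_2$ at level $1$. The Fibonacci numbers are nondecreasing, so distances at a higher level are never smaller than distances at a lower level.

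Given a triple $x,y,z$, the three smallest blocks containing the pairs form an ultrametric-like pattern. Two of the levels are equal, and the third is at most that common value. I would split into cases on where the middle vertex $y$ sits.

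\textbf{Case 1: $\lambda(x,z)$ is strictly smaller than the other two levels.} Then $\lambda(x,y)=\lambda(y,z)=\ell'>\lambda(x,z)=\ell$. The right-hand side is at least $2F_{\ell'}$, while $d_s(x,z)\le F_{\ell+1}\le F_{\ell'}$, so the inequality holds.

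\textbf{Case 2: $\lambda(x,z)$ is one of the two largest levels, with common value $\ell$.} One of $d_s(x,y)$, $d_s(y,z)$ is at least $F_\ell$, since it is at level $\ell$. It suffices to show the other is at least $F_{\ell+1}-F_\ell=F_{\ell-1}$, because $d_s(x,z)\le F_{\ell+1}$. If both $x,y$ and $y,z$ are at level $\ell$, the sum is at least $2F_\ell\ge F_{\ell+1}$ and we are done. Otherwise, say $\lambda(x,y)<\ell$, so $x$ and $y$ lie in the same half of the block of level $\ell$.

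The key subcase is when $d_s(x,z)=F_{\ell+1}$ but $d_s(y,z)=F_\ell$. I claim this forces $x$ and $y$ into different quarters. The long edge $(x,z)$ requires $x\notin A_{s_C}$ or $z\notin B_{s_C}$, with $A$ taken as the half containing $x,y$. The short edge $(y,z)$ requires $y\in A_{s_C}$ and $z\in B_{s_C}$. Together these give $x\in A_{1-s_C}$ and $y\in A_{s_C}$, so $\lambda(x,y)=\ell-1$. Hence $d_s(x,y)\ge F_{\ell-1}$, with the convention that level $1$ gives $1\ge F_1$, and the sum is at least $F_\ell+F_{\ell-1}=F_{\ell+1}$.

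In every other configuration of Case 2, $d_s(x,z)\le d_s(y,z)$ holds directly. Either $d_s(x,z)=F_\ell$, or both $(x,z)$ and $(y,z)$ have length $F_{\ell+1}$.

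The main obstacle is this tight subcase, together with the bookkeeping at small levels. For blocks of size four the quarters are singletons, so $x,y$ in different quarters means they form a pair at distance $1=F_1$. The level-$1$ convention must also be consistent with $F_2=1$. I expect the rest to be routine verification once the level function is set up.
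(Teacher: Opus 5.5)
Your proof is correct and follows essentially the same route as the paper. Your Case~1 is the paper's bound $d_s(u,v)\le F_\ell\le d_s(u,w)+d_s(w,v)$ for the internal edge, and your key subcase is the paper's observation that if the two cross distances differ, then the two same-half vertices lie in different quarters, so their distance is at least $F_{\ell-1}=F_{\ell+1}-F_\ell$; the only difference is that you organize the split with an explicit level function and its ultrametric pattern, where the paper phrases it as an induction over blocks.
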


\begin{proof}
Symmetry and positivity follow from the construction.
We prove the triangle inequality inductively,
starting with pairs.

Consider a block of size $2^\ell$, $\ell\ge2$,
with halves $A,B$.
Triangles contained in either half satisfy the inequality
by induction. For a remaining triangle, assume
$u,v\in A$ and $w\in B$.
Both cross distances belong to $\{F_\ell,F_{\ell+1}\}$.
If they differ, $u$ and $v$ lie in different halves of $A$, so
\[
d_s(u,v)\ge F_{\ell-1}
=F_{\ell+1}-F_\ell
=\lvert d_s(u,w)-d_s(v,w)\rvert.
\]
If they are equal, the same inequality holds immediately.
This proves the two triangle inequalities with a cross
distance on the left.

Finally, distances within $A$ are at most $F_\ell$,
while both cross distances are at least $F_\ell$. Hence
\[
d_s(u,v)\le F_\ell
\le d_s(u,w)+d_s(w,v).
\] \qed
\end{proof}

The rankings remain unchanged because switching $s_C$
only changes whether certain comparisons are strict or tied.
For example, a vertex in $A_0$ can always rank $B_0$
before $B_1$: the former is closer when $s_C=0$,
and both are equally far when $s_C=1$.
This observation applies at every split.

\begin{lemma}
\label{lem:lb-profile}
All metrics $d_s$ are compatible with one strict ranking
profile $\pi$, with ties resolved consistently.
\end{lemma}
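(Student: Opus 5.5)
We will construct one explicit strict ranking $\pi$ that depends only on the block hierarchy and not on $s$. Then we check that it is consistent with every $d_s$, meaning that for each vertex $u$, $v\succ_u w$ implies $d_s(u,v)\le d_s(u,w)$.

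\textbf{The ranking.} For a vertex $u$, the other vertices are grouped by the level of the smallest block containing both. For each $\ell=1,\ldots,h$, let $L_\ell(u)$ be the vertices whose smallest common block with $u$ has size $2^\ell$. So $L_1(u)$ is the pair partner, $L_2(u)$ is the other pair in $u$'s block of size four, and so on. Then $u$ ranks $L_1(u)\succ L_2(u)\succ\cdots\succ L_h(u)$. Inside a level $L_\ell(u)$ with $\ell\ge2$, let $C$ be the block of size $2^\ell$ with halves $A\ni u$ and $B$. If $u\in A_t$, then $u$ ranks $B_t$ before $B_{1-t}$, and breaks ties inside $B_t$ and $B_{1-t}$ by increasing label. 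This ordering is fixed, independent of $s$.

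\textbf{Consistency across levels.} If $v\in L_\ell(u)$ and $w\in L_{\ell'}(u)$ with $\ell<\ell'$, we need $d_s(u,v)\le d_s(u,w)$. Every distance assigned at a block of size $2^\ell$ lies in $\{F_\ell,F_{\ell+1}\}$, with the convention that level $1$ gives $1=F_1$. Distances assigned at level $\ell'\ge\ell+1$ are at least $F_{\ell'}\ge F_{\ell+1}$. So $d_s(u,v)\le F_{\ell+1}\le d_s(u,w)$, using that the Fibonacci numbers are nondecreasing. This is the same fact used at the end of the proof of Lemma~\ref{lem:lb-metric}, that distances within a half are at most $F_\ell$. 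In the case $\ell=1$ the bound is $1\le F_2$.

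\textbf{Consistency within a level.} Take $v\in B_t$ and $w\in B_{1-t}$. By construction $d_s(u,v)=F_\ell$ exactly when $s_C=t$, and otherwise $d_s(u,v)=F_{\ell+1}$. Meanwhile $d_s(u,w)=F_{\ell+1}$ always, because $u\in A_t$ and $w\notin B_t$. Hence $d_s(u,v)\le d_s(u,w)$ for both values of $s_C$. For $v,w$ in the same $B_{t'}$, the two distances are equal, so any tie-break is consistent.

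The only point needing care is this within-level case. The key observation is that the ordering "my own index $t$ first" makes flipping $s_C$ either create a strict inequality or a tie, but never a reversal. Finally, since each $d_s$ is a metric by Lemma~\ref{lem:lb-metric}, all $d_s$ lie in $\ConsMetrics{\pi}$.
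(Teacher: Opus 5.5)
Your proof is correct and follows essentially the same route as the paper. Both order vertices by the level of the smallest common block, put the corresponding quarter ($B_t$ for $u\in A_t$) first within a level, and break the remaining ties by label, and both check that flipping $s_C$ can only turn a strict comparison into a tie, never reverse it. The one cosmetic difference is that the paper obtains $\pi$ by restricting a single global total order on edges, which it reuses in the distinct-lengths perturbation footnote; your per-vertex rankings coincide with that restriction.
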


\begin{proof}
Fix a total order of unordered edges as follows.
First order them by the size of the smallest block containing
their endpoints, from smallest to largest.
Among edges assigned at the same block size, place those
joining corresponding quarters
($A_0$ to $B_0$ or $A_1$ to $B_1$) first.
Break remaining ties lexicographically by endpoint labels.
For pair edges, the quarter comparison is omitted.
Restrict this order to the edges incident to each vertex
to obtain $\pi$.

To verify compatibility, consider a block of size $2^\ell$
with halves $A,B$.
Distances within either half are at most $F_\ell$,
and distances between the halves are at least $F_\ell$.
Thus ranking vertices in one's own half first is valid.
For a vertex in $A_0$, distances to $B_0$ are no greater
than distances to $B_1$ under either choice of $s_C$.
The same holds for $A_1$ with $B_1$ placed first,
and symmetrically for vertices in $B$.
Within each of these quarters, all distances from the
fixed vertex are equal. The remaining tie-breaking is
therefore valid under every $d_s$. \qed
\end{proof}

An ordinal algorithm must consequently choose its tree
without knowing $s$.
To bound the cost of that tree, it is helpful to first
make its connections follow the hierarchy.
This can be done without increasing cost: reconnecting
vertices inside a block never costs more than an edge
leaving that block. Crucially, the same modification
works for every choice of $s$.

\begin{lemma}
\label{lem:lb-hierarchy}
For every spanning tree $T$, there is a spanning tree $T'$
whose induced subgraph on every block is connected and such that
\[
\cost_{d_s}(T')\le\cost_{d_s}(T)
\qquad\text{for every }s.
\]
The transformation does not depend on $s$.
\end{lemma}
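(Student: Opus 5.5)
We will transform $T$ by a sequence of local edge exchanges, processing blocks from the smallest (pairs) upward. Each exchange removes one tree edge and inserts one new edge, and we show it does not increase cost under any $d_s$. The transformation is described only through the block hierarchy, which does not depend on $s$; hence the same $T'$ works for every $s$.

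The key metric fact, valid for every $s$, is the following. If $u,v$ lie in a common block $C$ and $x\notin C$, then
\[
d_s(u,v)\le d_s(u,x).
\]
Indeed, if $C$ has size $2^\ell$, then every distance inside $C$ is at most $F_\ell$. This holds because the distance is assigned at some block of size $2^{m}\le 2^\ell$ and equals $F_m$ or $F_{m+1}\le F_\ell$; for pairs it equals $1$. On the other hand, $d_s(u,x)$ is assigned at a block strictly larger than $C$, so $d_s(u,x)\ge F_{\ell+1}\ge F_\ell$. The case of pairs is checked directly.

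The main step is to fix a block $C$ and assume inductively that $T$ already induces a connected subgraph on every block strictly inside $C$. Suppose $T[C]$ is disconnected. The halves $A$ and $B$ of $C$ are each connected in $T$, so $T[C]$ has exactly the two components $A$ and $B$, and the tree path in $T$ from some $a\in A$ to some $b\in B$ leaves $C$. Let $e=(u,x)$ be the first edge on this path, with $u\in C$ and $x\notin C$. Removing $e$ splits $T$ into two parts, one containing $a$ and the other containing $b$. We add a single edge $(u,w)$, where $w$ is a fixed vertex of the other half (for definiteness, the smallest label in the other half). Because $a$ and $b$ lie on opposite sides after removing $e$, and $u$ and $w$ lie in different halves that are each internally connected, the result is again a spanning tree. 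By the key fact, $d_s(u,w)\le d_s(u,x)$ for every $s$, so the cost does not increase. After this exchange $T[C]$ is connected.

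It remains to check that the exchange does not break connectivity of blocks already processed. Smaller blocks inside $C$ did not use $e$, since $e$ leaves $C$. A smaller block $D$ outside $C$ could, however, have used $e$ only if both endpoints of $e$ lie in $D$, and that is impossible because $u\in C$ and $D\cap C=\emptyset$. Thus connectivity inside all processed blocks is preserved, and the only new edge lies inside $C$. We process all blocks of a given size before moving to the next size. The main obstacle is checking that the exchange keeps a tree and only touches edges leaving $C$; this is handled by the choice of the first edge of the path out of $C$. Each exchange strictly increases the number of connected blocks, so the process terminates. The resulting $T'$ is connected on every block, satisfies $\cost_{d_s}(T')\le\cost_{d_s}(T)$ for every $s$, and was constructed without reference to $s$.
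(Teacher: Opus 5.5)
Your exchange argument is essentially the paper's. The paper adds an edge between two components of $T[C]$ and deletes a cycle edge with exactly one endpoint in $C$. You instead delete an edge leaving $C$ on an $A$--$B$ path and then reconnect through $(u,w)$. Your observation that, by induction, $T[C]$ has exactly the two components $A$ and $B$ (so one exchange per block suffices) is a harmless refinement. The spanning-tree check and the preservation of already-processed blocks are correct. The disjointness argument also covers blocks of the same size processed before $C$, which you should state, since you only mention smaller blocks.

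There is, however, a false step in the justification of your key metric fact. It is not true that every distance inside a block $C$ of size $2^\ell$ is at most $F_\ell$. Your bound $F_{m+1}\le F_\ell$ fails for $m=\ell$, that is, for edges between the two halves of $C$ itself, which may have length $F_{\ell+1}$. For example, in a block of size four an internal edge can have length $2>F_2=1$.

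The correct upper bound is $F_{\ell+1}$, and it is exactly what you need. You already showed that every edge leaving $C$ has length at least $F_{\ell+1}$, so $d_s(u,w)\le F_{\ell+1}\le d_s(u,x)$ for every $s$. This is precisely the inequality the paper uses. With this one-index correction your proof is complete.
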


\begin{proof}
Process the blocks from smallest to largest.
If the subgraph induced by a block $C$ is disconnected,
choose $u,v\in C$ in different components and add $(u,v)$.
The old path between them leaves $C$, so the resulting cycle
contains an edge $(x,y)$ with exactly one endpoint in $C$.
Remove that edge.

If $C$ has size $2^\ell$, every internal edge has length
at most $F_{\ell+1}$, while every edge leaving $C$
has length at least $F_{\ell+1}$. Thus
\[
d_s(u,v)\le F_{\ell+1}\le d_s(x,y)
\qquad\text{for every }s.
\]
The exchange reduces the number of components induced
on $C$ without increasing cost.
The removed edge is internal to no smaller block within
$C$ and no block disjoint from $C$, so previously established
connectivity is preserved.

Repeating the exchange until each block is connected
produces $T'$. Every choice in this procedure can be made
using the tree and vertex labels alone. \qed
\end{proof}

A tree connecting every block internally has exactly one
edge between the two halves of each block.
Its cost can therefore be calculated one split at a time.
An optimal tree uses an edge of length $F_\ell$ at each
block of size $2^\ell$. For a fixed tree, however,
we can choose $s_C$ so that its connecting edge has length
$F_{\ell+1}$ instead whenever $\ell\ge2$.
Pair edges always have length $1=F_1=F_2$.
There are $2^{h-\ell}$ blocks of size $2^\ell$,
giving explicit bounds on the optimal cost and the cost
forced on any fixed spanning tree.

\begin{lemma}
\label{lem:lb-costs}
For every choice $s$,
\[
\cost_{d_s}(\MST_{G_{d_s}})
=\sum_{\ell=1}^h2^{h-\ell}F_\ell
=2^{h+1}-F_{h+3}.
\]
For every spanning tree $T$, there is a choice $s$ such that
\[
\cost_{d_s}(T)\ge
\sum_{\ell=1}^h2^{h-\ell}F_{\ell+1}
=3\cdot2^h-F_{h+4}.
\]
\end{lemma}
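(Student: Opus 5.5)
Both claims reduce to trees $T'$ as in Lemma~\ref{lem:lb-hierarchy}, which are connected inside every block. Such a tree contains exactly one edge between the two halves of each block, and its cost is the sum of these crossing edges. To see this, note that a block $C$ of size $2^\ell$ with halves $A,B$ induces a connected subtree on $2^\ell$ vertices, so it has $2^\ell-1$ edges. The halves contribute $2(2^{\ell-1}-1)$ of them, which leaves exactly one edge between $A$ and $B$. Summing over the $2^{h-\ell}$ blocks of each size accounts for all $n-1$ edges: $\sum_{\ell=1}^h 2^{h-\ell}=2^h-1$. Each crossing edge at a block of size $2^\ell$ has length $F_\ell$ or $F_{\ell+1}$. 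For $\ell=1$ the length is $1=F_1=F_2$, so the two cases agree at the bottom level.

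\emph{Optimal cost.} Fix $s$ and let $T$ be any spanning tree. By Lemma~\ref{lem:lb-hierarchy} we may assume it connects every block internally, so its cost is at least $\sum_\ell 2^{h-\ell}F_\ell$, since every crossing edge at level $\ell$ has length at least $F_\ell$. For the matching upper bound, build the tree recursively: in each block $C$ of size $2^\ell\ge4$, add one edge between $A_{s_C}$ and $B_{s_C}$, which has length $F_\ell$, and use pair edges at the bottom. This tree connects every block and attains the sum, so $\cost_{d_s}(\MST)=\sum_{\ell=1}^h 2^{h-\ell}F_\ell$.

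\emph{Forced cost.} Given $T$, the transformation of Lemma~\ref{lem:lb-hierarchy} yields $T'$ independently of $s$, with $\cost_{d_s}(T)\ge\cost_{d_s}(T')$ for every $s$. For each block $C$ of size $2^\ell\ge4$, let $(u,v)$ be the unique edge of $T'$ between $A$ and $B$, with $u\in A_a$ and $v\in B_b$. Choose $s_C\in\{0,1\}$ so that $u\notin A_{s_C}$. For example, $s_C=1-a$ works, since then $u\in A_a\ne A_{s_C}$. The edge then has length $F_{\ell+1}$. The choices at different blocks are independent, because each distance is determined only at the smallest block containing its endpoints. Hence $\cost_{d_s}(T')=\sum_\ell 2^{h-\ell}F_{\ell+1}$, where the $\ell=1$ term is $F_2=1$ as required.

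\emph{Closed forms.} The main computational step is evaluating the sums, which I would do by induction on $h$. Let $S_h=\sum_{\ell=1}^h 2^{h-\ell}F_\ell$. Then $S_{h+1}=2S_h+F_{h+1}$, and the claim $S_h=2^{h+1}-F_{h+3}$ reduces to
\[
2^{h+2}-2F_{h+3}+F_{h+1}=2^{h+2}-F_{h+4}.
\]
This is the identity $F_{h+4}=2F_{h+3}-F_{h+1}$, which holds because $F_{h+4}=F_{h+3}+F_{h+2}$ and $F_{h+2}=F_{h+3}-F_{h+1}$. The base case is $S_1=1=4-F_4=4-3$. For the second sum, $\sum_\ell 2^{h-\ell}F_{\ell+1}=\sum_\ell 2^{h-\ell}(F_\ell+F_{\ell-1})$ with $F_0=0$. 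This equals $S_h+\tfrac12 S_{h-1}\cdot 2=S_h+S_{h-1}$, so it is
\[
2^{h+1}-F_{h+3}+2^h-F_{h+2}=3\cdot 2^h-F_{h+4}.
\]
The only delicate points are verifying that the crossing-edge decomposition is exact and the index bookkeeping at $\ell=1$.
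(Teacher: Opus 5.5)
Your proof is correct and follows the paper's argument. You reduce to the block-connected tree $T'$ of Lemma~\ref{lem:lb-hierarchy}, count exactly one crossing edge per block, choose $s_C$ so that $T'$'s crossing edge gets length $F_{\ell+1}$, and verify the closed forms through the Fibonacci recurrence. The only minor difference is that you derive the second closed form as $S_h+S_{h-1}$ via $F_{\ell+1}=F_\ell+F_{\ell-1}$ rather than by a separate induction, which is a harmless shortcut.
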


\begin{proof}
By Lemma~\ref{lem:lb-hierarchy}, some MST connects every
block internally. Its unique edge between the halves
of a block of size $2^\ell$ costs at least $F_\ell$.
Choosing an edge of length $F_\ell$ at every split
recursively gives a spanning tree attaining all these bounds.
Hence
\[
\cost_{d_s}(\MST_{G_{d_s}})
=\sum_{\ell=1}^h2^{h-\ell}F_\ell.
\]

For an arbitrary tree $T$, obtain $T'$ from
Lemma~\ref{lem:lb-hierarchy}.
At each block $C$ of size $2^\ell$, $\ell\ge2$,
let $(u,v)$ be its unique connecting edge,
with $u\in A$ and $v\in B$.
Choose $s_C=1$ if $u\in A_0$, and $s_C=0$ if $u\in A_1$.
This gives $(u,v)$ length $F_{\ell+1}$.
Making these choices separately at every block, and
using that pair edges have length $1=F_2$, yields
\[
\cost_{d_s}(T)
\ge\cost_{d_s}(T')
=\sum_{\ell=1}^h2^{h-\ell}F_{\ell+1}.
\]

Both sum identities follow by induction on $h$.
For $h=1$, they equal $1$, as do
$2^2-F_4$ and $3\cdot2-F_5$.
For the induction step,
\[
\begin{aligned}
\sum_{\ell=1}^h2^{h-\ell}F_\ell
&=2\sum_{\ell=1}^{h-1}2^{h-1-\ell}F_\ell+F_h\\
&=2(2^h-F_{h+2})+F_h\\
&=2^{h+1}-F_{h+3},
\end{aligned}
\]
where $2F_{h+2}-F_h=F_{h+3}$ by the Fibonacci recurrence.
Similarly,
\[
\begin{aligned}
\sum_{\ell=1}^h2^{h-\ell}F_{\ell+1}
&=2\sum_{\ell=1}^{h-1}2^{h-1-\ell}F_{\ell+1}+F_{h+1}\\
&=2(3\cdot2^{h-1}-F_{h+3})+F_{h+1}\\
&=3\cdot2^h-F_{h+4},
\end{aligned}
\]
using $2F_{h+3}-F_{h+1}=F_{h+4}$. \qed
\end{proof}

The optimal cost is the same for every choice of $s$,
while any fixed output tree incurs at least the larger
cost under some choice.
The leading terms of these costs are $2^{h+1}$ and
$3\cdot2^h$, respectively. Since Fibonacci numbers grow
more slowly than powers of $2$, their ratio approaches
$3/2$. We now complete the proof of
Theorem~\ref{thm:mst-lower}.
\Needspace{9\baselineskip}
\mstlower*

\begin{proof}
Fix a deterministic ordinal algorithm $\mathcal A$.
For each $h\ge2$, let $\pi$ be the common profile from
Lemma~\ref{lem:lb-profile} and let $T=\mathcal A(\pi)$.
All metrics in the construction are compatible with
this profile.\footnote{The construction allows distance
ties, resolved by the fixed edge order from
Lemma~\ref{lem:lb-profile}. The lower bound also holds
with distinct edge lengths. Let $q=\binom n2$ and let
$r(e)\in\{1,\ldots,q\}$ be the position of edge $e$
in that order. Set $c(u,u)=0$ and
$c(u,v)=q+r(\{u,v\})$ for $u\ne v$.
Its nonzero distances lie between $q+1$ and $2q$,
so $c$ is a metric. For $0<\eta<1/(2q)$,
the metrics $d_s+\eta c$ induce the same strict
profile $\pi$: original unequal distances differ by
at least $1$, and ties are resolved by $r$.
As $\eta\to0$, tree costs and optimal costs converge
to their unperturbed values. Thus any strict lower
bound on the unperturbed ratio is preserved for
sufficiently small $\eta$.}

By Lemma~\ref{lem:lb-costs}, some choice $s$ satisfies
\begin{equation}
\frac{\cost_{d_s}(T)}
     {\cost_{d_s}(\MST_{G_{d_s}})}
\ge
\frac{3\cdot2^h-F_{h+4}}
     {2^{h+1}-F_{h+3}}
=
\frac32-
\frac{F_h}{2(2^{h+1}-F_{h+3})}.
\label{eq:lb-finite}
\end{equation}
The equality uses
$2F_{h+4}-3F_{h+3}=F_h$.

Let $\varphi=(1+\sqrt5)/2<2$.
Since $F_h=\Theta(\varphi^h)$ and
$2^{h+1}-F_{h+3}=\Theta(2^h)$, the final term
in \eqref{eq:lb-finite} tends to zero.
For every $\varepsilon>0$, choosing $h$ sufficiently
large therefore gives a ratio strictly greater than
$3/2-\varepsilon$.
Thus no deterministic ordinal algorithm guarantees
distortion at most $3/2-\varepsilon$ on all finite
metric instances. \qed
\end{proof}

Table~\ref{tab:mst-lower-bound-values} illustrates the
finite-size bound in \eqref{eq:lb-finite}.

\begin{table}[htbp]
\centering
\caption{Optimal cost, the cost forced on any fixed output tree
by a suitable choice of metric, and the resulting lower bound.}
\label{tab:mst-lower-bound-values}
\begingroup
\renewcommand{\arraystretch}{1.4}
\begin{tabular*}{0.95\linewidth}{
    @{\extracolsep{\fill}}lccccc@{}
}
\hline
Number of vertices & $4$ & $8$ & $16$ & $32$ & $64$ \\
\hline
Optimal cost & $3$ & $8$ & $19$ & $43$ & $94$ \\
Forced cost  & $4$ & $11$ & $27$ & $62$ & $137$ \\
Lower bound & $\frac43$ & $\frac{11}{8}$
            & $\frac{27}{19}$ & $\frac{62}{43}$
            & $\frac{137}{94}$ \\[3pt]
\hline
\end{tabular*}
\endgroup
\end{table}

%% file: Files/MST-general.tex
\section{Upper Bound}

We give an ordinal algorithm for constructing a spanning tree with logarithmic distortion in general metrics. The algorithm follows a Bor\r{u}vka-type template. It maintains a forest, repeatedly lets each current component propose one ordinally best outgoing edge, adds an acyclic subset of these proposals, and contracts the resulting connected components. 

Algorithm~\ref{alg:mst-general} presents our algorithm. The input is an ordinal
profile $\Profile$ over the vertex set $\Vertices$. The algorithm incrementally
constructs a forest $F$ and maintains the collection $\mathcal{C}$ of its
connected components. Initially, $F=\emptyset$ and $\mathcal{C}$ consists of the
$\NumVertices$ singleton components $\{\firstV\}$ for all
$\firstV\in\Vertices$.

In each round, every component $\Component\in\mathcal{C}$ chooses an arbitrary
representative $r(\Component)\in\Component$. The representative uses its
ordinal ranking to identify its most preferred vertex outside $\Component$,
namely $\GB_{\Component}(r(\Component))$, and the component proposes the
corresponding edge. Once every component has made a proposal, the algorithm
selects a maximal acyclic subset of the proposed edges and adds them to $F$.
Each selected edge connects two distinct components, thereby contracting them into a
single larger component. The process repeats until only one connected component
remains, at which point $F$ is a spanning tree.

\begin{algorithm}[H]
\caption{Ordinal \MST algorithm for general metrics}
\label{alg:mst-general}
\begin{algorithmic}[1]
\Require Ordinal profile $\Profile=(\pi_{\firstV})_{\firstV\in\Vertices}$
\Ensure A spanning tree $F$ on $\Vertices$
\State $F\gets \emptyset$
\State $\mathcal{C}\gets \bigl\{\{\firstV\}:\firstV\in\Vertices\bigr\}$
\While{$|\mathcal{C}|>1$}
    \State $E\gets \emptyset$
    \For{each component $\Component\in\mathcal{C}$}
        \State choose an arbitrary representative $r(\Component)\in\Component$
        \State $g_{\Component}\gets \GB_{\Component}(r(\Component))$
        \State $E\gets E\cup\bigl\{\{r(\Component),g_{\Component}\}\bigr\}$
    \EndFor
    \State choose a maximal subset $A\subseteq E$ such that $F\cup A$ is acyclic
    \State $F\gets F\cup A$
    \State $\mathcal{C}\gets$ connected components of $(\Vertices,F)$
\EndWhile
\State \Return $F$
\end{algorithmic}
\end{algorithm}

For the analysis, fix an ordinal profile $\Profile$ and a metric
$\Metric\in\ConsMetrics{\Profile}$. With respect to this fixed metric, we write $\OPT=\cost_{\Metric}(\MST_G)$.

The distortion analysis reduces to a per-round bound. In every round, the
algorithm adds a maximal acyclic subset of the proposed edges. Lemma~\ref{lem:mst-round}
shows that the cost added in any such round is at most $2\OPT$, and that the
same round reduces the number of current components by at least a factor of two.
Thus the total cost of the algorithm is obtained by summing this bound over the
rounds.

\begin{figure}[t]
\centering
\begin{tikzpicture}[
scale=0.95,
every node/.style={font=\small},
vertex/.style={circle,draw=figgray!85,fill=white,inner sep=1.7pt},
mst/.style={thick,draw=figgray!80},
ord/.style={thick,dashed,draw=figblue!90,-{Latex[length=2.2mm]}},
cluster/.style={draw=figblue!55,fill=figlightblue,dashed,rounded corners,inner sep=11pt}
]
\node[vertex,label=below:{$r(\Component)$}] (r) at (0,0) {};
\node[vertex,label=below:{$p_1$}] (p1) at (1.1,0) {};
\node at (2.0,0) {$\cdots$};
\node[vertex,label=below:{$y_{\Component}$}] (y) at (2.9,0) {};
\node[vertex,label=below:{$x_{\Component}$}] (x) at (4.1,0) {};
\node at (5.0,0) {$\cdots$};
\node[vertex,label=below:{$g_{\Component}$}] (g) at (6.0,0) {};

% Extra invisible points make the component box slightly wider and clearer.
\coordinate (cL) at (-0.45,0.35);
\coordinate (cR) at (3.25,-0.35);

% Component boxes are drawn in the background so they do not cover vertices.
\begin{pgfonlayer}{background}
\node[cluster,fit=(cL)(cR)(r)(p1)(y)] (Cbox) {};
\node[cluster,fit=(g)] (Gbox) {};
\end{pgfonlayer}

\node[above=2pt] at ([xshift=-0.85cm]Cbox.north) {$\Component$};
\node[above=2pt] at (Gbox.north) {$\Component(g_{\Component})$};

\draw[mst] (r)--(p1);
\draw[mst] (p1)--(1.65,0);
\draw[mst] (2.35,0)--(y);

\draw[mst] (y)--(x);

\draw[mst] (x)--(4.55,0);
\draw[mst] (5.45,0)--(g);

\draw[ord,bend left=32] (r) to
    node[above,text=black] {$(r(\Component),g_{\Component})$} (g);

\draw[decorate,decoration={brace,mirror,amplitude=4pt},draw=figgray!70]
    (0,-0.55) -- (4.1,-0.55);
\node[below=0.15cm] at (2.05,-0.55)
    {$\Gamma_{\Component}=P_{\Component}[r(\Component),x_{\Component}]$};

\end{tikzpicture}
\caption{Charging an ordinal edge to a prefix of the corresponding $\MST_G$-path. The path from $r(\Component)$ to $g_{\Component}$ first leaves $\Component$ at $x_{\Component}$, and the algorithmic edge $(r(\Component),g_{\Component})$ is charged to the prefix $\Gamma_{\Component}=P_{\Component}[r(\Component),x_{\Component}]$.}
\label{fig:mst-charging}
\end{figure}

\begin{lemma}
\label{lem:mst-round}
Consider any round that starts with components $\mathcal{C}=\{\Component_1,\ldots,\Component_k\}$. The total cost of the edges added in the round is at most $2\OPT$. Moreover, after the round, the number of components is at most $k/2$.
\end{lemma}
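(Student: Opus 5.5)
The component-count claim is the easy half, so I would dispose of it first. Every component proposes an edge to a vertex outside itself, so each component is incident to at least one proposed edge. The set $A$ is a maximal subset of $E$ with $F\cup A$ acyclic. Hence each component $\Component$ ends up in the same new component as the component containing $g_{\Component}$: either its own proposal was selected, or that proposal closed a cycle because both endpoints were already connected through $F\cup A$. So every new component is a union of at least two old components, and the number of components after the round is at most $k/2$.

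For the cost bound I would use the charging scheme of Figure~\ref{fig:mst-charging}. Fix an $\MST_G$, say $T$. For each component $\Component$, let $P_{\Component}$ be the path in $T$ from $r(\Component)$ to $g_{\Component}$. Let $x_{\Component}$ be the first vertex on this path outside $\Component$, and $y_{\Component}$ its predecessor, which lies in $\Component$. Set $\Gamma_{\Component}=P_{\Component}[r(\Component),x_{\Component}]$.

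The key step is to bound $\Metric(r(\Component),g_{\Component})$ by $\cost_{\Metric}(\Gamma_{\Component})$. Since $x_{\Component}\notin\Component$ and $g_{\Component}$ is the top-ranked outside vertex of $r(\Component)$, consistency of the profile gives $\Metric(r(\Component),g_{\Component})\le \Metric(r(\Component),x_{\Component})$. The triangle inequality along $\Gamma_{\Component}$ then gives $\Metric(r(\Component),x_{\Component})\le\cost_{\Metric}(\Gamma_{\Component})$. Summing over all $k$ components, the cost of $A$ is at most the cost of $E$, which is at most $\sum_{\Component}\cost_{\Metric}(\Gamma_{\Component})$.

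It then remains to show that every edge of $T$ lies in at most two of the prefixes $\Gamma_{\Component}$. This is the main obstacle. I would orient each $\Gamma_{\Component}$ away from $r(\Component)$ and split it into two parts.
\begin{itemize}
\item \emph{Interior part.} The prefix up to $y_{\Component}$ uses only vertices of $\Component$. Since $T$ restricted to a vertex set is a forest, this part is the unique path in $T$ between two vertices of $\Component$. Its edges are charged only by $\Component$, because the components are disjoint and an edge with both endpoints in $\Component$ cannot appear in the interior part of another component's prefix.
\item \emph{Final edge.} The edge $(y_{\Component},x_{\Component})$ crosses from $\Component$ to a different component. A given crossing edge $(u,v)$ of $T$, with $u\in\Component$ and $v\in\Component'$, can be the final edge only for $\Component$, traversed from $u$ to $v$, or only for $\Component'$, traversed from $v$ to $u$. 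So it is charged at most twice.
\end{itemize}
Each edge of $T$ is either internal to one component or crossing, so the two parts cannot both charge the same edge. It follows that every edge of $T$ is charged at most twice, and the total cost of the round is at most $2\cost_{\Metric}(T)=2\OPT$.

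The point I would need to check carefully is that the interior part genuinely stays within $\Component$ and never re-enters $\Component$ after leaving it. This holds by the definition of $x_{\Component}$ as the first exit vertex, which truncates the path there.
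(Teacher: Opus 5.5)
Your proposal is correct and follows essentially the same route as the paper. The halving bound uses the same maximality argument. The cost bound charges $(r(\Component),g_{\Component})$ to the $\MST_G$-path prefix ending at the first exit vertex $x_{\Component}$. The only difference is cosmetic: the paper notes that every prefix edge has an endpoint in $\Component$ and that each \MST edge touches at most two components, which makes your interior/final-edge case split unnecessary.
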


\begin{proof}
Let $A$ be the maximal acyclic subset of proposed edges added in the round. We
first bound its cost.

Consider an edge
$(r(\Component),g_{\Component})\in A$ proposed by component $\Component$,
where $g_{\Component}=\GB_{\Component}(r(\Component))$.

Let $P_{\Component}$ be the unique path in $\MST_G$ from
$r(\Component)$ to $g_{\Component}$. Since $r(\Component)\in\Component$ and
$g_{\Component}\notin\Component$, this path must leave $\Component$. Let
$x_{\Component}$ be the first vertex of $P_{\Component}$ outside $\Component$
when walking from $r(\Component)$ toward $g_{\Component}$, as in
Figure~\ref{fig:mst-charging}.

Because $g_{\Component}$ is the highest-ranked vertex outside $\Component$ for
$r(\Component)$, and $x_{\Component}$ is also outside $\Component$, consistency
implies
\[
\Metric(r(\Component),g_{\Component})
\le
\Metric(r(\Component),x_{\Component}).
\]
By the triangle inequality along the prefix of $P_{\Component}$ from
$r(\Component)$ to $x_{\Component}$,
\[
    \Metric(r(\Component),x_{\Component})
    \le
    \sum_{e\in P_{\Component}[r(\Component),x_{\Component}]} \Metric(e).
\]
Every edge in this prefix has at least one endpoint in $\Component$: all
vertices before $x_{\Component}$ lie in $\Component$, and the final edge of the
prefix crosses from $\Component$ to $\Vertices\setminus\Component$. Hence
\[
    \Metric(r(\Component),g_{\Component})
    \le
    \sum_{e\in \MST_G:\, e\cap\Component\neq\emptyset} \Metric(e).
\]
Summing over all components whose proposed edge is added gives
\[
    \cost_{\Metric}(A)
    \le
    \sum_{\Component\in\mathcal{C}}
    \sum_{e\in \MST_G:\, e\cap\Component\neq\emptyset}\Metric(e).
\]
The components of $\mathcal{C}$ are disjoint, so every \MST edge has endpoints in
at most two components and appears on the right-hand side at most twice.
Therefore
\[
    \cost_{\Metric}(A)
    \le
    2\cost_{\Metric}(\MST_G)
    =
    2\OPT.
\]

It remains to prove the contraction bound. Consider the graph on the old
components $\mathcal{C}$ whose edges are the selected proposed edges in $A$.
Suppose that some old component $\Component$ is isolated in this graph after the
round. Then no selected edge is incident to $\Component$. However,
$\Component$ proposed an edge to another component. Adding this proposed edge to
the selected component graph cannot create a cycle, because $\Component$ is
isolated. Equivalently, adding the corresponding edge to $F\cup A$ would keep
the graph acyclic. This contradicts the maximality of $A$. Thus every new
connected component contains at least two old components, and the number of
components after the round is at most $k/2$. \qed
\end{proof}

Lemma~\ref{lem:mst-round} is the key ingredient in the distortion analysis. It shows
that every round makes two kinds of progress at once: the added cost is bounded
by a constant-factor copy of the optimum, and the number of current components
shrinks by a constant factor. Since the algorithm begins with singleton
components and ends when a single component remains, this per-round progress can
occur only logarithmically many times. This gives the desired logarithmic
distortion bound.

\mstgeneral*

\begin{proof}
Algorithm~\ref{alg:mst-general} adds only acyclic subsets of proposed edges.
Therefore $F$ remains acyclic throughout the execution. The algorithm terminates
only when all vertices belong to one connected component, so the final forest is
connected. Hence the output is a spanning tree.

By Lemma~\ref{lem:mst-round}, every round costs at most $2\OPT$ and reduces the
number of connected components by at least a factor of two. Since the algorithm
starts with $\NumVertices$ singleton components, there are at most
$\lceil\log_2 \NumVertices\rceil$ rounds. Summing the per-round cost bound gives
\[
    \cost_{\Metric}(F)
    \le
    2\lceil \log_2 \NumVertices\rceil \OPT.
\]
Since this bound holds for every metric $\Metric\in\ConsMetrics{\Profile}$ and
every ordinal profile $\Profile$, the distortion of Algorithm~\ref{alg:mst-general}
is $O(\log \NumVertices)$. \qed
\end{proof}

%% file: Files/MST-tree.tex
\section{MST on Tree Metrics}

The results for general metrics leave a substantial gap: our lower bound is only
$3/2$, while the algorithmic guarantee is logarithmic. This motivates looking
for structured metric classes where the ordinal information may be more
informative. The most immediate restriction is a line metric, but this is a very
special case of a tree metric. We therefore study tree metrics as the next
natural class: they strictly generalize line metrics, while still having enough
combinatorial structure to make a sharper analysis possible. Tree metrics are
also a standard and central class in metric algorithms and metric embedding
theory~\cite{bartal1996probabilistic,fakcharoenphol2003tight}.

A metric $\Metric$ on $\Vertices$ is a \emph{tree metric} if there exists a
nonnegatively weighted tree $\RealTree=(\Vertices,\RealTreeEdges,w)$ such that
$\Metric(\firstV,\secondV)$ is the length of the unique
$\firstV$--$\secondV$ path in this tree for every
$\firstV,\secondV\in\Vertices$. We refer to this tree as a realization tree for
$\Metric$.

We now assume that the unknown metric $\Metric$ is a tree metric, induced by a
realization tree $\RealTree=(\Vertices,\RealTreeEdges,w)$. The algorithm is the
same round-based algorithm as Algorithm~\ref{alg:mst-general}, except that each
component redirects its proposed edge before the maximal acyclic subset is
chosen. Specifically, when a component $\Component$ chooses representative
$r(\Component)\in\Component$ and globally best outside vertex
\[
    g_{\Component}=\GB_{\Component}(r(\Component)),
\]
the algorithm does not propose the edge
$(r(\Component),g_{\Component})$. Instead, it redirects from
$g_{\Component}$ to its locally best vertex inside $\Component$:
\[
    h_{\Component}=\LB_{\Component}(g_{\Component}),
\]
and proposes the edge $(h_{\Component},g_{\Component})$. The algorithm then
selects a maximal subset of these redirected proposed edges whose addition keeps
the forest acyclic, adds this subset to $F$, and updates the connected
components.

For the analysis, fix a round, a component $\Component$, its representative
$r(\Component)$, and its globally best outside vertex
$g_{\Component}=\GB_{\Component}(r(\Component))$. Let $P_{\Component}$ be the
path in the realization tree $\RealTree$ from $r(\Component)$ to
$g_{\Component}$. Let $x_{\Component}$ be the first vertex of $P_{\Component}$
outside $\Component$ when walking from $r(\Component)$ to $g_{\Component}$, and
let $y_{\Component}$ be the vertex immediately before $x_{\Component}$ on this
path. Thus $y_{\Component}\in\Component$, $x_{\Component}\notin\Component$, and
$(y_{\Component},x_{\Component})\in\RealTreeEdges$. This local configuration
is illustrated in Figure~\ref{fig:tree-local-charge}.

Since $x_{\Component}$ is outside $\Component$ and $g_{\Component}$ is the
globally best outside vertex for $r(\Component)$, consistency gives
$\Metric(r(\Component),g_{\Component})\le
\Metric(r(\Component),x_{\Component})$. On the other hand, $x_{\Component}$
lies on the tree path from $r(\Component)$ to $g_{\Component}$, so
\[
\Metric(r(\Component),g_{\Component})
=
\Metric(r(\Component),x_{\Component})
+
\Metric(x_{\Component},g_{\Component}).
\]
By the definition of the globally best outside vertex and this equality, the distance between $x_{\Component}$ and $g_{\Component}$ is zero.
\begin{observation}
\label{obs:tree-zero-suffix}
For the fixed round and component $\Component$,
\[
    \Metric(x_{\Component},g_{\Component})=0.
\]
\end{observation}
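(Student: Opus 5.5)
The plan is to combine the two relations already derived just before the statement and then use nonnegativity of the metric. First, since $x_{\Component}\notin\Component$ and $g_{\Component}=\GB_{\Component}(r(\Component))$ is the highest-ranked vertex of $r(\Component)$ outside $\Component$, we have $\rank_{r(\Component)}(g_{\Component})\le\rank_{r(\Component)}(x_{\Component})$. Either $x_{\Component}=g_{\Component}$, in which case the claim is immediate because $\Metric(g_{\Component},g_{\Component})=0$, or $g_{\Component}\succ_{r(\Component)} x_{\Component}$ strictly in the ranking. In the latter case, consistency of $\Profile$ with $\Metric$ gives
\[
\Metric(r(\Component),g_{\Component})\le\Metric(r(\Component),x_{\Component}).
\]

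Second, I would justify the path decomposition carefully. By definition of a tree metric, $\Metric(r(\Component),g_{\Component})$ is the total $w$-weight of the unique $r(\Component)$--$g_{\Component}$ path $P_{\Component}$ in the realization tree $\RealTree$. The vertex $x_{\Component}$ lies on $P_{\Component}$, so $P_{\Component}$ splits at $x_{\Component}$ into two subpaths. These subpaths are themselves the unique tree paths from $r(\Component)$ to $x_{\Component}$ and from $x_{\Component}$ to $g_{\Component}$, because subpaths of a path in a tree are the unique paths between their endpoints. Summing weights then yields
\[
\Metric(r(\Component),g_{\Component})=\Metric(r(\Component),x_{\Component})+\Metric(x_{\Component},g_{\Component}).
\]
Such a vertex $x_{\Component}$ exists because $P_{\Component}$ starts inside $\Component$ and ends at $g_{\Component}\notin\Component$.

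Finally, substituting the equality into the inequality gives $\Metric(r(\Component),x_{\Component})+\Metric(x_{\Component},g_{\Component})\le\Metric(r(\Component),x_{\Component})$. Cancelling the finite common term leaves $\Metric(x_{\Component},g_{\Component})\le 0$. Since all edge weights are nonnegative, this forces equality with zero.

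There is no real obstacle here. The only points needing care are the orientation of the consistency condition, which runs from ranking order to weak distance inequality and so covers ties, and the subpath property of tree paths used in the decomposition.
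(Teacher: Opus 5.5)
Your proof is correct and is essentially the paper's argument. Consistency gives $\Metric(r(\Component),g_{\Component})\le\Metric(r(\Component),x_{\Component})$, the tree-path decomposition at $x_{\Component}$ rewrites the left side as $\Metric(r(\Component),x_{\Component})+\Metric(x_{\Component},g_{\Component})$, and nonnegativity forces the suffix to vanish. Your separate treatment of the case $x_{\Component}=g_{\Component}$, which is needed because consistency is stated only for distinct vertices, and your justification of the subpath property make explicit details the paper leaves implicit.
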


\begin{figure}[H]
\centering
\begin{tikzpicture}[
x=1cm,y=1cm,
every node/.style={font=\small},
vertex/.style={circle,draw=figgray!85,fill=white,inner sep=1.8pt},
treeedge/.style={thick,draw=figgray!80},
posedge/.style={line width=1.2pt,draw=figorange!90},
zeroedge/.style={thick,densely dotted,draw=figgray!55},
ord/.style={thick,dashed,draw=figblue!90,-{Latex[length=2mm]}},
cluster/.style={draw=figblue!55,fill=figlightblue,dashed,rounded corners,inner sep=10pt}
]
\node[vertex,label=below:{$r(\Component)$}] (r) at (0,0) {};
\node at (1.0,0) {$\cdots$};
\node[vertex,label=below:{$y_{\Component}$}] (y) at (2.1,0) {};
\node[vertex,label=below:{$x_{\Component}$}] (x) at (4.1,0) {};
\node at (5.1,0) {$\cdots$};
\node[vertex,label=below:{$g_{\Component}$}] (g) at (6.3,0) {};
\node[vertex,label=above:{$h_{\Component}$}] (h) at (0.85,1.55) {};

% Component C.
\coordinate (cL) at (-0.35,0);
\coordinate (cR) at (2.55,0);
\coordinate (cT) at (0.85,1.55);

% Component box is drawn in the background so it does not cover vertices.
\begin{pgfonlayer}{background}
\node[cluster,fit=(cL)(cR)(cT)(r)(y)(h)] (Cbox) {};
\end{pgfonlayer}

\node[above=2pt] at ([xshift=-0.90cm]Cbox.north) {$\Component$};

% Tree path from r(C) to g_C.
\draw[treeedge] (r)--(0.65,0);
\draw[treeedge] (1.35,0)--(y);

\draw[posedge] (y)--node[above=4pt,text=black] {$f_{\Component}$} (x);

\draw[zeroedge] (x)--(4.75,0);
\draw[zeroedge] (5.55,0)--(g);

% Zero-length suffix label.
\draw[decorate,decoration={brace,mirror,amplitude=4pt},draw=figgray!70]
    (4.1,-0.58) -- (6.3,-0.58);
\node[below=0.12cm] at (5.2,-0.58)
    {$\Metric(x_{\Component},g_{\Component})=0$};

% Redirected ordinal edge, kept away from the boundary of C.
\draw[ord] (g) to[out=135,in=5,looseness=0.85] (h);

\end{tikzpicture}
\caption{Local tree-metric charging. The path from $r(\Component)$ to $g_{\Component}$ first leaves $\Component$ through the positive tree edge $f_{\Component}=(y_{\Component},x_{\Component})$. Since $g_{\Component}$ is globally best outside $\Component$, the suffix from $x_{\Component}$ to $g_{\Component}$ has zero length. Therefore the redirected edge to $h_{\Component}$ is no more expensive than $f_{\Component}$, unless it already has zero cost.}
\label{fig:tree-local-charge}
\end{figure}

Observation~\ref{obs:tree-zero-suffix} explains why the redirection step is
useful. Once the path from $r(\Component)$ to $g_{\Component}$ leaves
$\Component$ at $x_{\Component}$, the remaining suffix to $g_{\Component}$ has
zero length. Since the redirected endpoint $h_{\Component}$ is the locally best
vertex of $g_{\Component}$ inside $\Component$, the cost of the redirected edge
can be charged to the first realization-tree edge by which the path leaves
$\Component$.

\begin{lemma}
\label{lem:tree-local-charge}
For every proposed edge $(g_{\Component},h_{\Component})$ in the tree-metric
algorithm, if $\Metric(g_{\Component},h_{\Component})>0$, then
\[
\Metric(g_{\Component},h_{\Component})
\le
\Metric(y_{\Component},x_{\Component}),
\]
where $(y_{\Component},x_{\Component})$ is the first realization-tree edge by
which the path from $r(\Component)$ to $g_{\Component}$ leaves $\Component$.
\end{lemma}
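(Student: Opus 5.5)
The argument is a short chain of inequalities built from Observation~\ref{obs:tree-zero-suffix}, the definition of $h_{\Component}$ as locally best, and the triangle inequality. The natural comparison vertex inside $\Component$ is $y_{\Component}$: it lies in $\Component$ and is adjacent in $\RealTree$ to $x_{\Component}$, which is at distance zero from $g_{\Component}$.

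First, since $h_{\Component}=\LB_{\Component}(g_{\Component})$ and $y_{\Component}\in\Component$, consistency of the ranking $\pi_{g_{\Component}}$ with $\Metric$ gives
\[
\Metric(g_{\Component},h_{\Component})\le \Metric(g_{\Component},y_{\Component}).
\]
The case $h_{\Component}=y_{\Component}$ is trivially covered. Second, since $x_{\Component}$ lies on the tree path from $y_{\Component}$ to $g_{\Component}$, we have
\[
\Metric(g_{\Component},y_{\Component})=\Metric(y_{\Component},x_{\Component})+\Metric(x_{\Component},g_{\Component}).
\]
The triangle inequality in the direction $\le$ would also suffice here. Observation~\ref{obs:tree-zero-suffix} says the second term vanishes, so $\Metric(g_{\Component},y_{\Component})=\Metric(y_{\Component},x_{\Component})$. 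Chaining the two steps yields the claimed bound $\Metric(g_{\Component},h_{\Component})\le\Metric(y_{\Component},x_{\Component})$.

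The hypothesis $\Metric(g_{\Component},h_{\Component})>0$ is not needed for the inequality itself. Its role is to guarantee that the charged edge $f_{\Component}=(y_{\Component},x_{\Component})$ has positive weight, which the later injectivity argument uses. I would note this explicitly after the chain: positivity of the left-hand side forces $\Metric(y_{\Component},x_{\Component})>0$.

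There is no real obstacle. The only points needing care are that $y_{\Component}$ is genuinely in $\Component$, which holds because $x_{\Component}$ is the \emph{first} vertex of the path outside $\Component$ and the path starts at $r(\Component)\in\Component$. Also, the ranking of $g_{\Component}$ may break ties arbitrarily, and that is fine since consistency only asserts the weak inequality.
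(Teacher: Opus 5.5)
Your proof is correct and follows the paper's argument exactly. The locally-best property of $h_{\Component}$ gives $\Metric(g_{\Component},h_{\Component})\le\Metric(g_{\Component},y_{\Component})$, and the tree path from $g_{\Component}$ through $x_{\Component}$ to $y_{\Component}$, combined with Observation~\ref{obs:tree-zero-suffix}, gives $\Metric(g_{\Component},y_{\Component})=\Metric(y_{\Component},x_{\Component})$. You are also right that the positivity hypothesis plays no role in the inequality itself; the paper's proof does not use it either.
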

\begin{proof}
By Observation~\ref{obs:tree-zero-suffix},
$\Metric(x_{\Component},g_{\Component})=0$. Since
$y_{\Component}\in\Component$ and
$h_{\Component}=\LB_{\Component}(g_{\Component})$ is the highest-ranked vertex
of $g_{\Component}$ inside $\Component$, consistency implies that the redirected
endpoint is no farther from $g_{\Component}$ than $y_{\Component}$ is:
$\Metric(g_{\Component},h_{\Component})\le
\Metric(g_{\Component},y_{\Component})$.

Now the path from $g_{\Component}$ to $y_{\Component}$ first follows the
zero-length suffix from $g_{\Component}$ to $x_{\Component}$ and then crosses
the single tree edge $(x_{\Component},y_{\Component})$. Hence
$\Metric(g_{\Component},y_{\Component})
=
\Metric(x_{\Component},y_{\Component})$. Combining these two facts gives
\[
\Metric(g_{\Component},h_{\Component})
\le
\Metric(x_{\Component},y_{\Component}),
\]
as claimed. \qed
\end{proof}

The remaining point is to show that the same positive realization-tree edge cannot be charged twice. We prove this by contradiction. Figure~\ref{fig:tree-double-charge} summarizes the two possible types of double charge: an earlier-round double charge and a same-round double charge.

\begin{figure}[t]
\centering
\begin{tikzpicture}[
x=0.92cm,y=0.92cm,
every node/.style={font=\small},
vertex/.style={circle,draw=figgray!85,fill=white,inner sep=1.55pt},
treeedge/.style={thick,draw=figgray!80},
posedge/.style={line width=1.2pt,draw=figorange!90},
zeroedge/.style={thick,densely dotted,draw=figgray!55},
ord/.style={thick,dashed,draw=figblue!90,-{Latex[length=1.8mm]}},
clusterC/.style={draw=figblue!55,fill=figlightblue,dashed,rounded corners},
clusterD/.style={draw=figgreen!55,fill=figlightgreen,dashed,rounded corners}
]

% ============================================================
% (a1) Earlier-round: merged component already inside C
% ============================================================
\begin{scope}[shift={(0,0)}]
\node[anchor=west] at (-0.3,1.95)
{\textbf{(a1)} Earlier-round double charge: the earlier merged component is already inside $\Component$.};

\draw[clusterC] (-0.55,-0.65) rectangle (3.15,1.45);
\node at (-0.18,1.20) {$\Component$};

\node[vertex,label=below:{$y_{\Component}$}] (aY) at (1.35,0) {};
\node[vertex,label=below:{$x_{\Component}$}] (aX) at (3.35,0) {};
\node[vertex,label=left:{$g_D$}] (aGD) at (2.35,0.78) {};
\node[vertex,label=below:{$g_{\Component}$}] (aGC) at (5.45,0) {};
\node[vertex,label=above:{$h_{\Component}$}] (aHC) at (0.45,0.78) {};

\draw[posedge] (aY)--node[below,text=black] {$f_{\Component}$} (aX);
\draw[zeroedge] (aGD)--node[above right,text=black] {$0$} (aX);
\draw[zeroedge] (aX)--node[below,text=black] {$0$} (aGC);
\draw[ord] (aGC) to[out=140,in=20] (aHC);

\node[align=left,anchor=west] at (6.10,0.55)
{$g_D\in\Component$ and $\Metric(g_D,g_{\Component})=0$};
\node[align=left,anchor=west] at (6.10,0.05)
{$\Rightarrow \Metric(g_{\Component},h_{\Component})=0$};

\end{scope}

% ============================================================
% (a2) Earlier-round: merged component outside C
% ============================================================
\begin{scope}[shift={(0,-3.25)}]
\node[anchor=west] at (-0.3,1.95)
{\textbf{(a2)} Earlier-round double charge: the earlier merged component remains outside $\Component$.};

\draw[clusterC] (-0.55,-0.65) rectangle (2.55,0.65);
\node at (-0.18,0.42) {$\Component$};

\node[vertex,label=below:{$r(\Component)$}] (bR) at (0.25,0) {};
\node at (0.95,0) {$\cdots$};
\node[vertex,label=below:{$y_{\Component}$}] (bY) at (1.85,0) {};
\node[vertex,label=below:{$x_{\Component}$}] (bX) at (3.65,0) {};
\node[vertex,label=below:{$g_{\Component}$}] (bGC) at (5.45,0) {};
\node[vertex,label=above:{$g_D$}] (bGD) at (1.85,0.95) {};

\draw[treeedge] (bR)--(0.65,0);
\draw[treeedge] (1.20,0)--(bY);
\draw[posedge] (bY)--node[below,text=black] {$f_{\Component}$} (bX);
\draw[zeroedge] (bY)--node[right,text=black] {$0$} (bGD);
\draw[zeroedge] (bX)--node[below,text=black] {$0$} (bGC);

\node[align=left,anchor=west] at (6.10,0.55)
{$g_D\notin\Component$ and};
\node[align=left,anchor=west] at (6.10,0.05)
{$\Metric(r(\Component),g_D)<\Metric(r(\Component),g_{\Component})$};

\end{scope}

% ============================================================
% (b1) Same-round: g_D outside C
% ============================================================
\begin{scope}[shift={(0,-6.50)}]
\node[anchor=west] at (-0.3,1.95)
{\textbf{(b1)} Same-round double charge: $g_D\notin\Component$.};

\draw[clusterC] (-0.55,-0.65) rectangle (2.55,0.65);
\node at (-0.18,0.42) {$\Component$};

\draw[clusterD] (3.05,-0.65) rectangle (4.65,0.65);
\node at (4.38,0.42) {$D$};

\node[vertex,label=below:{$r(\Component)$}] (cR) at (0.25,0) {};
\node at (0.95,0) {$\cdots$};
\node[vertex,label=below:{$y_{\Component}$}] (cY) at (1.85,0) {};
\node[vertex,label=below:{$x_{\Component}$}] (cX) at (3.65,0) {};
\node[vertex,label=below:{$g_{\Component}$}] (cGC) at (5.45,0) {};
\node[vertex,label=above:{$g_D$}] (cGD) at (1.85,0.95) {};

\draw[treeedge] (cR)--(0.65,0);
\draw[treeedge] (1.20,0)--(cY);
\draw[posedge] (cY)--node[below,text=black] {$f_{\Component}$} (cX);
\draw[zeroedge] (cY)--node[right,text=black] {$0$} (cGD);
\draw[zeroedge] (cX)--node[below,text=black] {$0$} (cGC);

\node[align=left,anchor=west] at (6.10,0.55)
{$g_D$ is a better outside};
\node[align=left,anchor=west] at (6.10,0.05)
{candidate for $r(\Component)$};

\end{scope}

% ============================================================
% (b2) Same-round: g_D in C but g_C outside D
% ============================================================
\begin{scope}[shift={(0,-9.75)}]
\node[anchor=west] at (-0.3,1.95)
{\textbf{(b2)} Same-round double charge: $g_D\in\Component$ but $g_{\Component}\notin D$.};

\draw[clusterC] (-0.55,-0.65) rectangle (2.75,1.45);
\node at (-0.18,1.20) {$\Component$};

\draw[clusterD] (3.05,-0.65) rectangle (5.05,1.45);
\node at (4.78,1.20) {$D$};

\node[vertex,label=below:{$y_{\Component}$}] (dY) at (2.05,0) {};
\node[vertex,label=below:{$x_{\Component}$}] (dX) at (3.65,0) {};
\node[vertex,label=above:{$g_D$}] (dGD) at (1.20,0.78) {};
\node[vertex,label=above:{$r(D)$}] (dRD) at (3.85,0.78) {};
\node[vertex,label=below:{$g_{\Component}$}] (dGC) at (5.55,0) {};

\draw[posedge] (dY)--node[below,text=black] {$f_{\Component}$} (dX);
\draw[zeroedge] (dY)--node[below left=-1pt,text=black] {$0$} (dGD);
\draw[zeroedge] (dX)--node[below,text=black] {$0$} (dGC);
\draw[ord] (dRD) to[out=145,in=35] (dGD);

\node[align=left,anchor=west] at (6.10,0.55)
{$g_{\Component}$ is a better outside};
\node[align=left,anchor=west] at (6.10,0.05)
{candidate for $r(D)$};

\end{scope}

\end{tikzpicture}

\caption{Double-charging cases for a positive realization-tree edge
$f_{\Component}=(y_{\Component},x_{\Component})$. Panels (a1)--(a2) show the
earlier-round case. Panels (b1)--(b2) show the same-round case. Each row
corresponds to one subcase in the proof of
Lemma~\ref{lem:tree-no-double-charge}.}
\label{fig:tree-double-charge}
\end{figure}
\begin{lemma}
\label{lem:tree-no-double-charge}
Across the whole execution of the tree-metric algorithm, every positive edge of
the realization tree $\RealTree$ is charged by at most one added edge.
\end{lemma}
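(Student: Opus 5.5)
We argue by contradiction. Suppose two distinct added edges, with positive cost, charge the same positive realization-tree edge $f$. Say they are proposed by components $\Component$ and $D$, and by Lemma~\ref{lem:tree-local-charge} each is charged to its own exit edge: $f=f_{\Component}=(y_{\Component},x_{\Component})=f_D=(y_D,x_D)$. We use two facts throughout. First, Observation~\ref{obs:tree-zero-suffix} gives $\Metric(x_{\Component},g_{\Component})=0$ and $\Metric(x_D,g_D)=0$. Second, on the tree path from $r(\Component)$ we have
\[
\Metric(r(\Component),g_{\Component})=\Metric(r(\Component),y_{\Component})+w(f),
\]
and $w(f)>0$. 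So any vertex $z\notin\Component$ with $\Metric(z,y_{\Component})=0$ satisfies $\Metric(r(\Component),z)<\Metric(r(\Component),g_{\Component})$. Consistency then forces $z\succ_{r(\Component)} g_{\Component}$, contradicting $g_{\Component}=\GB_{\Component}(r(\Component))$. The same holds with $D$ in place of $\Component$. We split by whether the two proposals occur in different rounds or in the same round.

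\textbf{Earlier round.} Let $D$ propose in an earlier round. After that round, $D$, $h_D$ and $g_D$ all lie in one component $D'$, and the later component $\Component$ either contains $D'$ or is disjoint from it.
\begin{itemize}
\item If $\Component\supseteq D'$, then $y_D\in\Component$, so the orientation of $f$ must agree: $y_D=y_{\Component}$ and $x_D=x_{\Component}$. Otherwise $x_{\Component}=y_D\in\Component$, which is impossible. Hence $g_D\in\Component$ and
\[
\Metric(g_{\Component},g_D)\le \Metric(g_{\Component},x_{\Component})+\Metric(x_{\Component},g_D)=0 .
\]
Since $h_{\Component}=\LB_{\Component}(g_{\Component})$, this gives $\Metric(g_{\Component},h_{\Component})=0$, contradicting positivity (panel (a1)).
\item If $\Component\cap D'=\emptyset$, then $y_D\notin\Component$, so the orientation is flipped: $x_{\Component}=y_D$ and $y_{\Component}=x_D$. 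Then $g_D\notin\Component$ and $\Metric(g_D,y_{\Component})=0$, and the strict-inequality fact above contradicts the choice of $g_{\Component}$ (panel (a2)).
\end{itemize}

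\textbf{Same round.} Here $\Component\neq D$ are disjoint, so the orientation must be flipped: $y_D=x_{\Component}$ and $x_D=y_{\Component}$. Hence $\Metric(g_D,y_{\Component})=0$ and $\Metric(g_{\Component},y_D)=0$.
\begin{itemize}
\item If $g_D\notin\Component$, we get the contradiction for $r(\Component)$ (panel (b1)).
\item If $g_D\in\Component$ but $g_{\Component}\notin D$, the symmetric argument for $r(D)$ applies (panel (b2)).
\item If $g_D\in\Component$ and $g_{\Component}\in D$, both proposed edges join the same pair of current components. Adding both would close a cycle, so the maximal acyclic subset selects at most one of them, contradicting that both were added.
\end{itemize}
A single component proposes only one edge per round, which covers the remaining possibility.

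The main obstacle is the earlier-round case. It requires tracking how the component containing $D$ evolves, and in particular justifying that $g_D$ lies in $D'$ and hence ends up either inside or entirely outside every later component. Determining the orientation of $f$ from whether $y_D\in\Component$ is what makes the case split exhaustive. I would state this containment as an explicit invariant before running the cases.
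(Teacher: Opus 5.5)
Your proof is correct and follows the paper's argument essentially case for case. The later charge is placed on $\Component$; the earlier-round case splits on whether the merged component containing $D$, $h_D$ and $g_D$ lies inside or outside $\Component$; and the same-round case splits on where $g_D$ and $g_{\Component}$ lie, ending with the acyclicity contradiction. Your explicit orientation step (deciding between $y_D=y_{\Component}$ and $y_D=x_{\Component}$ from whether $y_D\in\Component$) and your ``two distinct added edges'' assumption make precise what the paper only sketches.
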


\begin{proof}
Suppose, for contradiction, that a positive realization-tree edge is charged
twice. Among the two charges, choose one that occurs in a latest round, and let
this charge come from component $\Component$.

Let
\[
f_{\Component}=(y_{\Component},x_{\Component})
\]
be the charged edge, oriented so that $y_{\Component}\in\Component$ and
$x_{\Component}\notin\Component$. Denote $r(\Component)$ be the representative of
$\Component$, and let
\[
g_{\Component}=\GB_{\Component}(r(\Component))
\qquad\text{and}\qquad
h_{\Component}=\LB_{\Component}(g_{\Component}).
\]
By Observation~\ref{obs:tree-zero-suffix}, $\Metric(x_{\Component},g_{\Component})=0.$ Since this charge is to the positive edge $f_{\Component}$, the redirected edge
has positive cost:
\[
\Metric(g_{\Component},h_{\Component})>0.
\]

Let the other charge come from component $D$, with representative $r(D)$ and
globally best outside vertex
\[
g_D=\GB_D(r(D)).
\]

\medskip
\noindent\textbf{Case 1: the other charge occurred in an earlier round.}
After the earlier charged edge was added, its endpoints were merged into one
current component. Since that earlier edge was charged to the same tree edge
$f_{\Component}$, this merged component reaches across $f_{\Component}$ through
zero-length parts on both sides of $f_{\Component}$.

At the time of the later charge by $\Component$, this earlier merged component
is either contained in $\Component$ or disjoint from $\Component$.

If it is contained in $\Component$, then, as shown in
Figure~\ref{fig:tree-double-charge}(a1), component $\Component$ contains $g_D$,
and $g_D$ is at zero distance from $x_{\Component}$. Since
$\Metric(x_{\Component},g_{\Component})=0$, we get
$\Metric(g_D,g_{\Component})=0$. Therefore the locally best vertex
$h_{\Component}=\LB_{\Component}(g_{\Component})$ satisfies
\[
\Metric(g_{\Component},h_{\Component})=0,
\]
contradicting $\Metric(g_{\Component},h_{\Component})>0$.

If it is disjoint from $\Component$, then, as shown in
Figure~\ref{fig:tree-double-charge}(a2), $g_D$ is outside $\Component$ and is at
zero distance from $y_{\Component}$. Thus $g_D$ is an outside candidate for
$r(\Component)$. The path from $r(\Component)$ to $g_D$ reaches
$y_{\Component}$ and then has zero additional length. The path from
$r(\Component)$ to $g_{\Component}$ reaches $y_{\Component}$, crosses the
positive edge $f_{\Component}$, and then has zero additional length from
$x_{\Component}$ to $g_{\Component}$. Hence
\[
\Metric(r(\Component),g_D)
<
\Metric(r(\Component),g_{\Component}),
\]
contradicting the choice of $g_{\Component}=\GB_{\Component}(r(\Component))$.

\medskip
\noindent\textbf{Case 2: the other charge occurs in the same round.}
Since $\Component$ and $D$ are distinct current components and
$y_{\Component}\in\Component$, component $D$ cannot contain $y_{\Component}$.
Therefore, for $D$ to charge the same edge $f_{\Component}$, its path must cross
$f_{\Component}$ in the opposite direction. Thus $x_{\Component}\in D$,
$y_{\Component}\notin D$, and applying Observation~\ref{obs:tree-zero-suffix}
to component $D$ gives
\[
\Metric(y_{\Component},g_D)=0.
\]

If $g_D\notin\Component$, then $g_D$ is an outside candidate for
$r(\Component)$, as shown in Figure~\ref{fig:tree-double-charge}(b1). The path
from $r(\Component)$ to $g_D$ reaches $y_{\Component}$ and then has zero
additional length, while the path from $r(\Component)$ to $g_{\Component}$
crosses the positive edge $f_{\Component}$. Hence
\[
\Metric(r(\Component),g_D)
<
\Metric(r(\Component),g_{\Component}),
\]
contradicting the choice of $g_{\Component}$.

Thus $g_D\in\Component$. Now if $g_{\Component}\notin D$, then
$g_{\Component}$ is an outside candidate for $r(D)$, as shown in
Figure~\ref{fig:tree-double-charge}(b2). Since
$\Metric(x_{\Component},g_{\Component})=0$, the path from $r(D)$ to
$g_{\Component}$ reaches $x_{\Component}$ and then has zero additional length,
while the path from $r(D)$ to $g_D$ crosses the positive edge
$f_{\Component}$. Hence
\[
\Metric(r(D),g_{\Component})
<
\Metric(r(D),g_D),
\]
contradicting the choice of $g_D$.

The only remaining possibility is that $g_D\in\Component$ and
$g_{\Component}\in D$. Then the proposal of $\Component$ connects $\Component$
to $D$, and the proposal of $D$ also connects $D$ to $\Component$. If both
redirected edges were added, then together with the existing paths inside
$\Component$ and inside $D$, they would create a cycle. This contradicts the
fact that the algorithm adds an acyclic subset of the proposed edges.

Both cases lead to contradictions. Therefore no positive realization-tree edge
is charged twice. \qed
\end{proof}

For the final comparison with the optimum \MST, we use the simple fact that in a
tree metric the realization tree itself has exactly the optimum spanning-tree
cost. Indeed, every spanning tree in the complete graph must connect across each
realization-tree cut at least once, so it must pay for every realization-tree
edge.

\begin{lemma}
\label{lem:tree-mst-value}
For a tree metric $\Metric$ realized by
$\RealTree=(\Vertices,\RealTreeEdges,\Weight)$, we have 
\[
    \cost_{\Metric}(\MST_G)
    =
    \sum_{e\in\RealTreeEdges}\Weight(e).
\]
\end{lemma}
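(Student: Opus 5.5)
The plan is to prove two inequalities: one comparing the realization tree with an arbitrary spanning tree, and one showing the realization tree is itself a competitor.

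For the upper bound, $\RealTree$ has vertex set $\Vertices$, so it is a spanning tree of the complete graph $G$. Each of its edges $(u,v)\in\RealTreeEdges$ has metric length $\Metric(u,v)=\Weight(u,v)$, because the unique $u$--$v$ path in $\RealTree$ is that single edge. Hence $\cost_{\Metric}(\RealTreeEdges)=\sum_{e\in\RealTreeEdges}\Weight(e)$, and therefore $\cost_{\Metric}(\MST_G)\le\sum_{e\in\RealTreeEdges}\Weight(e)$.

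For the lower bound, fix any spanning tree $T$ of $G$. We route each edge $(i,j)\in T$ along its realization path $P_{ij}$ in $\RealTree$, so that
\[
\Metric(i,j)=\sum_{e\in P_{ij}}\Weight(e).
\]
Summing over all edges of $T$ and exchanging the order of summation gives
\[
\cost_{\Metric}(T)=\sum_{e\in\RealTreeEdges}\Weight(e)\cdot\bigl|\{(i,j)\in T: e\in P_{ij}\}\bigr|.
\]
Since all weights are nonnegative, it suffices to show that every $e\in\RealTreeEdges$ lies on $P_{ij}$ for at least one $(i,j)\in T$.

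To see this, remove $e$ from $\RealTree$. This splits $\Vertices$ into two nonempty sides $S$ and $\Vertices\setminus S$. Because $T$ is connected and spans $\Vertices$, some edge $(i,j)\in T$ has $i\in S$ and $j\notin S$. Any $i$--$j$ path in $\RealTree$ must cross the cut, and $e$ is the only edge of $\RealTree$ joining the two sides, so $e\in P_{ij}$. Thus $\cost_{\Metric}(T)\ge\sum_{e\in\RealTreeEdges}\Weight(e)$, and applying this to $T=\MST_G$ yields the matching lower bound.

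The only delicate point is that the realization tree lives on exactly the vertex set $\Vertices$, with no Steiner points. This is what makes the upper-bound direction immediate, and it is guaranteed by the definition of a tree metric used here. Zero-weight edges cause no difficulty, since the argument only uses nonnegativity of $\Weight$. I therefore expect no real obstacle; the cut-crossing step is the heart of the argument, and the equality of the two sums follows.
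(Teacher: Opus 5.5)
Your proposal is correct and follows the paper's proof. The realization tree $\RealTree$ is itself a feasible spanning tree, which gives the upper bound. For the lower bound, the cost of an arbitrary spanning tree $T$ is rewritten as $\sum_{e} m_e \Weight(e)$, and $m_e \ge 1$ because $T$ must cross every cut of $\RealTree$. You also spell out two steps the paper leaves implicit: that $\Metric(u,v)=\Weight(u,v)$ for realization-tree edges, and that a cut-crossing edge's realization path must contain $e$.
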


\begin{proof}
The realization tree $\RealTree$ itself is a feasible spanning tree in the
complete graph metric, so
\[
\cost_{\Metric}(\MST_G)
\le
\sum_{e\in\RealTreeEdges}\Weight(e).
\]

For the reverse inequality, let $T$ be any spanning tree on $\Vertices$ in the
complete graph metric. For a realization-tree edge $e\in\RealTreeEdges$, let $m_e$ be the number of
edges $(\firstV,\secondV)\in T$ whose path in $\RealTree$ contains $e$.
Equivalently, $m_e$ is the number of edges of $T$ crossing the cut created by
deleting $e$ from $\RealTree$. Since $T$ is connected, every such cut is crossed
by at least one edge of $T$, and therefore $m_e\ge 1$ for every
$e\in\RealTreeEdges$.

Thus
\[
\cost_{\Metric}(T)
=
\sum_{(\firstV,\secondV)\in T}
\sum_{e\in P_{\RealTree}(\firstV,\secondV)} \Weight(e)
=
\sum_{e\in\RealTreeEdges} m_e\Weight(e)
\ge
\sum_{e\in\RealTreeEdges}\Weight(e).
\] \qed
\end{proof}

The preceding lemmas complete the charging argument. Lemma~\ref{lem:tree-local-charge}
shows that every positive-cost edge added by the algorithm can be charged to a
positive edge of the realization tree. Lemma~\ref{lem:tree-no-double-charge}
shows that no positive realization-tree edge receives more than one charge.
Finally, Lemma~\ref{lem:tree-mst-value} shows that the optimum \MST cost is
exactly the sum of the weights of all realization-tree edges. Combining these
facts gives distortion one.

\msttree*

\begin{proof}
The algorithm adds only subsets of proposed edges whose addition keeps the
current forest acyclic. Therefore $F$ remains acyclic throughout the execution.
The algorithm terminates only when all vertices belong to one connected
component, so the final forest is connected. Hence the output is a spanning
tree.

By Lemma~\ref{lem:tree-local-charge}, every added edge has either zero cost or
can be charged to one positive edge of the realization tree. By
Lemma~\ref{lem:tree-no-double-charge}, no positive realization-tree edge is
charged more than once. Therefore
\[
    \cost_{\Metric}(F)
    \le
    \sum_{e\in\RealTreeEdges}\Weight(e)
    =
    \cost_{\Metric}(\MST_G),
\]
where the equality follows from Lemma~\ref{lem:tree-mst-value}. Since $F$ is
itself a spanning tree, its cost is also at least $\cost_{\Metric}(\MST_G)$.
Thus equality holds. \qed 
\end{proof}

%% file: Files/MST-doubling.tex
\section{MST on Doubling Metrics}
\label{sec:mst-doubling}

Theorem~\ref{thm:mst-tree} shows that an optimal minimum
spanning tree can be recovered from ordinal information
on tree metrics. We next consider metrics of bounded
doubling dimension. This class includes fixed-dimensional
Euclidean metrics and is incomparable with tree metrics:
tree metrics can have arbitrarily large doubling dimension,
while bounded-doubling metrics need not admit a tree
representation. 
We show that their geometric structure allows distortion
independent of $\NumVertices$. A metric has doubling dimension
$\Dim$ if every ball can be covered by at most $2^{\Dim}$ balls
of half its radius, centered at vertices, and $\Dim$ is the
smallest value with this property.
Thus, bounded doubling dimension limits how many smaller
balls are needed to cover any larger ball.

Our approach uses a spanning-tree property called
\emph{endpoint cut stability}, which requires that,
when removing any tree edge splits the vertices into
two sets, each endpoint is at least as close to the
other endpoint as to any other vertex in the opposite set.
More precisely, for an edge $(u,v)\in T$, let
$\Vertices_u$ and $\Vertices_v$ be the vertex sets of
the components of $T-(u,v)$ containing $u$ and $v$,
respectively. The condition is
\[
\Metric(u,v)\le\Metric(u,x)
\quad\text{for every }x\in\Vertices_v,
\qquad
\Metric(u,v)\le\Metric(v,y)
\quad\text{for every }y\in\Vertices_u.
\]
Figure~\ref{fig:endpoint-cut-stability} illustrates
these comparisons.

\begin{figure}[t]
\centering
\begin{tikzpicture}[
x=1cm,y=1cm,
every node/.style={font=\normalsize},
vertex/.style={
    circle,
    draw=figgray!85,
    fill=white,
    inner sep=1.8pt
},
treeedge/.style={
    thick,
    draw=figgray!80
},
stableedge/.style={
    line width=1.3pt,
    draw=figorange!90
},
comparison/.style={
    thick,
    dashed,
    draw=figblue!90
},
clusterL/.style={
    draw=figblue!55,
    fill=figlightblue,
    dashed,
    rounded corners,
    inner sep=10pt
},
clusterR/.style={
    draw=figgreen!55,
    fill=figlightgreen,
    dashed,
    rounded corners,
    inner sep=10pt
},
cutline/.style={
    densely dashed,
    draw=figgray!60
}
]
\node[vertex,label=below:{$u$}] (i) at (2.6,0) {};
\node[vertex,label=below:{$v$}] (j) at (5.0,0) {};

\node[vertex,label=above:{$y$}] (kL) at (0.8,1.3) {};
\node[vertex] (l1) at (0.6,-0.7) {};
\node[vertex] (l2) at (1.55,-1.05) {};
\node[vertex] (l3) at (1.35,0.25) {};

\node[vertex,label=above:{$x$}] (kR) at (6.8,1.3) {};
\node[vertex] (r1) at (6.4,-0.75) {};
\node[vertex] (r2) at (7.25,-0.25) {};
\node[vertex] (r3) at (6.15,0.35) {};

\draw[treeedge] (kL)--(l3);
\draw[treeedge] (l3)--(i);
\draw[treeedge] (l3)--(l1);
\draw[treeedge] (l1)--(l2);

\draw[treeedge] (j)--(r3);
\draw[treeedge] (r3)--(kR);
\draw[treeedge] (r3)--(r1);
\draw[treeedge] (r1)--(r2);

\draw[stableedge]
    (i)--node[above=4pt,text=black]
    {$\Metric(u,v)$}
    (j);

\begin{pgfonlayer}{background}
\node[
    clusterL,
    fit=(kL)(l1)(l2)(l3)(i)
] (LeftCluster) {};
\node[
    clusterR,
    fit=(j)(kR)(r1)(r2)(r3)
] (RightCluster) {};
\end{pgfonlayer}

\node[above=2pt] at (LeftCluster.north)
    {$\Vertices_u$};
\node[above=2pt] at (RightCluster.north)
    {$\Vertices_v$};

\draw[cutline]
    (3.8,-1.65) -- (3.8,1.75);
\node[above,text=figgray!80] at (3.8,1.75)
    {cut induced by $(u,v)$};

\draw[comparison]
    (i) to[out=55,in=160] (kR);
\draw[comparison]
    (j) to[out=125,in=20] (kL);
\end{tikzpicture}
\caption{Removing $(u,v)$ separates the tree into
$\Vertices_u$ and $\Vertices_v$.
Endpoint cut stability requires
$\Metric(u,v)\le\Metric(u,x)$ for every $x\in\Vertices_v$
and $\Metric(u,v)\le\Metric(v,y)$ for every $y\in\Vertices_u$.
The dashed blue curves illustrate these comparisons;
they are not tree edges.}
\label{fig:endpoint-cut-stability}
\end{figure}

We first show how to construct such a tree from the
ordinal profile $\Profile$. The algorithm follows
Kruskal's procedure, adding an edge whenever its endpoints
belong to different components. To determine the processing
order without numerical distances, we use the comparisons
provided by the rankings.

Form a directed graph whose vertices are the unordered
edges of the complete graph on $\Vertices$.
Add an arc from $(u,v)$ to $(u,w)$ whenever 
$
\rank_u(v)<\rank_u(w).
$
Every such arc implies
$\Metric(u,v)\le\Metric(u,w)$ for every
$\Metric\in\ConsMetrics{\Profile}$.
The same is true along any directed path.
Consequently, edges in the same strongly connected
component must have equal length under every consistent
metric.

Contract these strongly connected components and order
the resulting directed acyclic graph topologically.
Process its components in that order, using a fixed
order of endpoint labels within each component.
Write $\le^\ast$ for the resulting total order of edges.
Although comparisons within a strongly connected
component need not all be preserved, the edges involved
are necessarily equal in length. Comparisons between
different components are preserved.

\begin{algorithm}[t]
\caption{Ordinal construction of an endpoint cut-stable tree}
\label{alg:mst-doubling}
\begin{algorithmic}[1]
\Require Ordinal profile $\Profile$ on vertex set $\Vertices$
\State Form the directed graph of edge comparisons induced by $\Profile$
\State Compute its strongly connected components
\State Order the components topologically, breaking ties by fixed labels
\State Obtain $\le^\ast$ by ordering edges within each component by fixed labels
\State $F\gets\emptyset$; initialize a union-find structure on $\Vertices$
\ForAll{edges $(u,v)$ in increasing order of $\le^\ast$}
    \If{$u$ and $v$ lie in different components of $F$}
        \State Add $(u,v)$ to $F$ and merge their components
    \EndIf
\EndFor
\State \Return $T=(\Vertices,F)$
\end{algorithmic}
\end{algorithm}

The algorithm uses only the rankings and fixed labels,
so it is deterministic and ordinal. Processing all edges
of the complete graph produces a spanning tree.
The next lemma establishes the property needed for
its analysis.

\begin{lemma}
\label{lem:ecs-construction}
The tree returned by Algorithm~\ref{alg:mst-doubling}
is endpoint cut-stable with respect to every
$\Metric\in\ConsMetrics{\Profile}$.
\end{lemma}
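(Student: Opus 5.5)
The plan is to use the standard Kruskal cut argument, but only against edges that share an endpoint with the tree edge, since those are exactly the comparisons that $\le^\ast$ is guaranteed to respect.

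Fix $\Metric\in\ConsMetrics{\Profile}$ and a tree edge $(u,v)\in T$. Let $\Vertices_u,\Vertices_v$ be the two sides of $T-(u,v)$, and fix $x\in\Vertices_v$ with $x\neq v$. We must show $\Metric(u,v)\le\Metric(u,x)$; the inequality $\Metric(u,v)\le\Metric(v,y)$ for $y\in\Vertices_u$ then follows by symmetry.

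\textbf{Step 1: $(u,v)$ precedes $(u,x)$ in $\le^\ast$.} The tree path in $T$ from $u$ to $x$ passes through $(u,v)$, because $u\in\Vertices_u$ and $x\in\Vertices_v$.
\begin{itemize}
\item Suppose instead that $(u,x)<^\ast(u,v)$. When Kruskal processed $(u,x)$, either it added it, or $u$ and $x$ were already joined in $F$.
\item If it was added, then $(u,x)\in T$, and $T$ would contain two distinct $u$--$x$ paths: the edge itself, and the path through $(u,v)$. This is impossible in a tree.
\item If $u$ and $x$ were already connected, the connecting path consists of edges processed before $(u,x)$, hence before $(u,v)$. Since $T$ is a tree, that path must be the unique tree path from $u$ to $x$. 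But that path contains $(u,v)$, which had not yet been processed. This is a contradiction.
\end{itemize}
Hence $(u,v)\le^\ast(u,x)$. This is the usual observation that a Kruskal tree edge is the $\le^\ast$-minimum edge across the cut it defines, applied here to the cut $(\Vertices_u,\Vertices_v)$.

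\textbf{Step 2: transfer the order to $\Metric$.} The edges $(u,v)$ and $(u,x)$ share the endpoint $u$, so the comparison graph contains an arc between them in one direction.
\begin{itemize}
\item If $\rank_u(v)<\rank_u(x)$, consistency gives $\Metric(u,v)\le\Metric(u,x)$ directly.
\item Otherwise there is an arc from $(u,x)$ to $(u,v)$. Since $(u,v)\le^\ast(u,x)$, and $\le^\ast$ respects the topological order of the condensation, these two edges cannot lie in different strongly connected components. An arc from $(u,x)$'s component to $(u,v)$'s component would force $(u,x)$'s component to come first, contradicting $(u,v)\le^\ast(u,x)$.
\item So they lie in the same strongly connected component. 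Chaining consistency inequalities around a cycle shows that all edges in that component have equal length under $\Metric$. In particular $\Metric(u,v)=\Metric(u,x)$.
\end{itemize}
In either case $\Metric(u,v)\le\Metric(u,x)$, as required.

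\textbf{Main obstacle.} The delicate step is Step 2. Ties inside a strongly connected component are broken by labels and may contradict a ranking arc, so $\le^\ast$ need not agree with every ranking comparison. The proof must therefore invoke the equal-length property of strongly connected components rather than the arc order alone. Step 1 deliberately compares $(u,v)$ only with edges sharing an endpoint, since only those comparisons are certified by $\Profile$.
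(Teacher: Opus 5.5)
Your proof is correct and takes essentially the same approach as the paper: a Kruskal cut argument that compares $(u,v)$ only with edges sharing the endpoint $u$, combined with the fact that edges in one strongly connected component have equal length under every consistent metric. The only difference is the order of the two steps. The paper assumes $\Metric(u,x)<\Metric(u,v)$, uses strictness to put the two edges in different components so that $(u,x)$ is processed first, and then derives the connectivity contradiction, whereas you first derive $(u,v)\le^\ast(u,x)$ from the Kruskal argument and then transfer it to $\Metric$, handling the tie case explicitly.
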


\begin{proof}
Fix a consistent metric $\Metric$ and an edge $(u,v)\in T$.
Suppose that some $x\in\Vertices_v$ satisfies
\[
\Metric(u,x)<\Metric(u,v).
\]
Consistency forces $u$ to rank $x$ before $v$, so the
comparison graph contains an arc from $(u,x)$ to $(u,v)$.
These two edges cannot belong to the same strongly
connected component, since their lengths differ.
Therefore, $(u,x)$ is processed before $(u,v)$.

When $(u,x)$ is processed, its endpoints are either
already connected or become connected by adding this edge.
In either case, the current forest contains a path
from $u$ to $x$ that does not use $(u,v)$.
All edges of this path remain in the final tree.
Thus $u$ and $x$ are connected in $T-(u,v)$,
contradicting $x\in\Vertices_v$.

It follows that
$\Metric(u,v)\le\Metric(u,x)$ for every $x\in\Vertices_v$.
Exchanging $u$ and $v$ proves the other required inequality. \qed
\end{proof}

We now bound the cost of an endpoint cut-stable tree
in a doubling metric. The comparison uses a light
spanner: a graph whose shortest-path distances approximate
the original metric and whose total edge weight is
bounded relative to an \MST.

Fix a metric $\Metric$ of doubling dimension $\Dim$
and $0<\epsilon<1/2$.
By Borradaile, Le, and
Wulff-Nilsen~\cite{BorradaileLeWulffNilsen2019},
there is a graph $\Spanner$ on $\Vertices$, with each
edge $(u,v)$ weighted by $\Metric(u,v)$, such that
\[
\Metric(u,v)
\le \SpannerDist(u,v)
\le (1+\epsilon)\Metric(u,v)
\qquad\text{for all }u,v\in\Vertices,
\]
where $\SpannerDist$ denotes shortest-path distance
in $\Spanner$. Its total weight satisfies
\[
\cost_{\Metric}(\Spanner)
\le
\epsilon^{-O(\Dim)}
\cost_{\Metric}(\MST_G),
\]
where $G$ is the complete graph weighted by $\Metric$.
The spanner is used only in the analysis.

Let $T$ be endpoint cut-stable with respect to $\Metric$.
We show that every positive-weight edge of $T$ can be
assigned to a distinct spanner edge of comparable length.
Zero-weight edges, if allowed by the model, contribute
nothing to the cost and can be omitted from this assignment.

\begin{lemma}
\label{lem:single-edge}
For every edge $(u,v)\in T$ of weight
$w=\Metric(u,v)>0$, there is an edge $(a,b)\in\Spanner$
with $a\in\Vertices_u$ and $b\in\Vertices_v$ such that
\[
\Metric(a,b)\ge(1-\epsilon)w,
\qquad
\Metric(u,a)\le\epsilon w,
\qquad
\Metric(v,b)\le\epsilon w.
\]
\end{lemma}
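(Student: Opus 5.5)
We plan to take the spanner path between $u$ and $v$ and locate the edge where it crosses the cut $(\Vertices_u,\Vertices_v)$. Let $Q=(u=q_0,q_1,\ldots,q_m=v)$ be a shortest path in $\Spanner$ from $u$ to $v$. Its length is $\SpannerDist(u,v)\le(1+\epsilon)w$, so its total excess over $w$ is at most $\epsilon w$. Since $q_0\in\Vertices_u$ and $q_m\in\Vertices_v$, some consecutive pair crosses the cut. We take the \emph{first} index $t$ with $q_{t+1}\in\Vertices_v$, and set $a=q_t\in\Vertices_u$ and $b=q_{t+1}\in\Vertices_v$. Then $(a,b)\in\Spanner$ with endpoints on the required sides.

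Next we bound $\Metric(u,a)$ and $\Metric(v,b)$ using endpoint cut stability. Write $L_1$ for the length of the prefix $Q[u,a]$, $\ell=\Metric(a,b)$ for the crossing edge, and $L_2$ for the length of the suffix $Q[b,v]$. Then
\[
L_1+\ell+L_2\le(1+\epsilon)w .
\]
Since $b\in\Vertices_v$, cut stability gives $w\le\Metric(u,b)\le L_1+\ell$. Combining these yields $L_2\le\epsilon w$, hence $\Metric(v,b)\le L_2\le\epsilon w$. Symmetrically, $a\in\Vertices_u$ gives $w\le\Metric(v,a)\le\ell+L_2$, hence $\Metric(u,a)\le L_1\le\epsilon w$. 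The choice of first crossing is not even needed for these bounds: any crossing edge of $Q$ works. This is the conceptual heart of the argument, because stability applied at both endpoints forces nearly all of the path's length onto the crossing edge.

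Finally, we get the length bound on $(a,b)$. By the triangle inequality,
\[
w=\Metric(u,v)\le\Metric(u,a)+\Metric(a,b)+\Metric(b,v)\le L_1+\ell+L_2 .
\]
Together with $L_1+\ell+L_2\le(1+\epsilon)w$, this gives only an upper bound on the sum, not a lower bound on $\ell$. Instead we use the stability inequality derived above, $w\le\ell+L_2$, together with $L_2\le\epsilon w$. This gives $\ell\ge w-L_2\ge(1-\epsilon)w$, as required.

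The main obstacle is simply recognizing that both stability inequalities must be used, with $u$ compared against $b\in\Vertices_v$ and $v$ compared against $a\in\Vertices_u$, and that $\SpannerDist$ is realized by an actual path in $\Spanner$. The positivity $w>0$ ensures the bounds are meaningful. No doubling structure is used here; that enters only when the distinct assignment of spanner edges is later shown, presumably via a packing argument bounding how many tree edges share one spanner edge.
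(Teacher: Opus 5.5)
Your proof is correct and is essentially the paper's argument: take a shortest $u$--$v$ path in $\Spanner$, pick the edge $(a,b)$ where it crosses from $\Vertices_u$ to $\Vertices_v$, and play the stability inequalities $\Metric(u,b)\ge w$ and $\Metric(v,a)\ge w$ against the length bound $(1+\epsilon)w$. The paper takes $a$ to be the last vertex in $\Vertices_u$ rather than using the first crossing, and concludes via $\ell\ge w-p$ rather than $\ell\ge w-L_2$; both differences are immaterial, since a path shorter than $2w$ crosses this cut exactly once anyway.

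Your closing aside is off, though. In the paper the doubling dimension enters only through the lightness of the spanner. Injectivity (Lemma~\ref{lem:injective}) follows from endpoint cut stability together with $\epsilon<1/2$, not from a packing argument.
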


\begin{proof}
Let $P$ be a shortest $u$--$v$ path in $\Spanner$.
Its length is at most $(1+\epsilon)w$.
Because $P$ begins in $\Vertices_u$ and ends in
$\Vertices_v$, it has an edge $(a,b)$ crossing from
$\Vertices_u$ to $\Vertices_v$.
For definiteness, take $a$ to be the last vertex of
$P$ in $\Vertices_u$ and $b$ its successor.

Let $p$ and $q$ be the lengths of the portions of $P$
from $u$ to $a$ and from $b$ to $v$, respectively,
and let $\ell=\Metric(a,b)$. Then
\begin{equation}
p+\ell+q\le(1+\epsilon)w.
\label{eq:doubling-path}
\end{equation}
Endpoint cut stability and the triangle inequality give
\[
p+\ell\ge\Metric(u,b)\ge w,
\qquad
q+\ell\ge\Metric(v,a)\ge w.
\]
Subtracting these inequalities separately from
\eqref{eq:doubling-path} yields
\[
p\le\epsilon w,
\qquad
q\le\epsilon w.
\]
Hence
\[
\Metric(u,a)\le p\le\epsilon w,
\qquad
\Metric(v,b)\le q\le\epsilon w,
\]
and
\[
\ell\ge w-p\ge(1-\epsilon)w.
\] \qed
\end{proof}

For each positive-weight tree edge, choose a spanner
edge as in Lemma~\ref{lem:single-edge}.
The endpoint distance bounds ensure that two different
tree edges cannot be assigned to the same spanner edge.

\begin{lemma}
\label{lem:injective}
The assignment is injective on the positive-weight
edges of $T$.
\end{lemma}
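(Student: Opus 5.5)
We argue by contradiction. Suppose two distinct positive-weight edges $(u,v)$ and $(u',v')$ of $T$, with weights $w=\Metric(u,v)>0$ and $w'=\Metric(u',v')>0$, are both assigned the same spanner edge $\{a,b\}$. By Lemma~\ref{lem:single-edge}, for $(u,v)$ we may orient things so that $a\in\Vertices_u$, $b\in\Vertices_v$, $\Metric(u,a)\le\epsilon w$ and $\Metric(v,b)\le\epsilon w$. Likewise $\{a,b\}$ crosses the cut of $(u',v')$. Since $\{a,b\}$ crosses both cuts, both tree edges lie on the unique $a$--$b$ path in $T$. By symmetry of the two edges and of the labels $a,b$, we may assume $(u',v')$ lies on the $a$-side of $(u,v)$, i.e.\ $u',v'\in\Vertices_u$, with $u'$ the endpoint nearer to $a$.

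The first step is to fix the orientations and side memberships. Because $v'$ lies on the path from $u'$ toward $b$, the side $\Vertices_{v'}$ of $T-(u',v')$ contains $v'$, $u$, $v$ and $b$. The side $\Vertices_{u'}$ contains $u'$ and $a$. Lemma~\ref{lem:single-edge} applied to $(u',v')$, with this orientation forced by $a\in\Vertices_{u'}$ and $b\in\Vertices_{v'}$, gives $\Metric(u',a)\le\epsilon w'$ and $\Metric(v',b)\le\epsilon w'$. Also $u'\in\Vertices_u$.

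The second step combines endpoint cut stability for both edges with the triangle inequality. Stability of $(u',v')$ with $u\in\Vertices_{v'}$ gives
\[
w'\le\Metric(u',u)\le\Metric(u',a)+\Metric(a,u)\le\epsilon w'+\epsilon w,
\]
so $w'\le \frac{\epsilon}{1-\epsilon}w$, and in particular $w'<w$. Stability of $(u,v)$ with $u'\in\Vertices_u$ gives
\[
w\le\Metric(v,u')\le\Metric(v,b)+\Metric(b,v')+\Metric(v',u')\le\epsilon w+(1+\epsilon)w'\le\epsilon w+\frac{\epsilon(1+\epsilon)}{1-\epsilon}w=\frac{2\epsilon}{1-\epsilon}w .
\]
For $\epsilon<1/3$ this is strictly less than $w$, which is a contradiction since $w>0$.

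The main obstacle is the constant. The crude chain above needs $\epsilon<1/3$ rather than the full range $\epsilon<1/2$. Since the spanner weight bound $\epsilon^{-O(\Dim)}$ only changes by a $2^{O(\Dim)}$ factor when $\epsilon$ is fixed to a constant such as $1/4$, I would first simply restrict to such an $\epsilon$. If the full range is wanted, I would try a sharper route: route through $\Metric(a,b)\ge(1-\epsilon)w$ and $\Metric(a,b)\le\Metric(a,u')+w'+\Metric(v',b)\le(1+2\epsilon)w'$. This yields $w'\ge\frac{1-\epsilon}{1+2\epsilon}w$, which conflicts with $w'\le\frac{\epsilon}{1-\epsilon}w$ whenever $(1-\epsilon)^2>\epsilon(1+2\epsilon)$, i.e.\ for $\epsilon$ up to about $0.38$; I would then adjust the chain further if needed. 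The remaining care is the degenerate cases $u'=u$ or $v'=u$, where the side memberships above still hold with the same inequalities.
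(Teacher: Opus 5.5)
Your setup and your first inequality $w'\le\Metric(u',u)\le\epsilon w'+\epsilon w$ are correct, and they match the paper's argument up to swapping which edge carries the primes. The gap is in your second inequality. Applying stability of $(u,v)$ to $y=u'$ forces you to route $\Metric(v,u')$ across the tree edge $(u',v')$, which costs an extra $w'$. Your pair of inequalities then gives $(1-\epsilon)^2\le\epsilon(1+\epsilon)$, so the contradiction holds only for $\epsilon<1/3$.

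The lemma, however, is stated for the fixed $\epsilon\in(0,1/2)$ of this section. Restricting to $\epsilon=1/4$ would still suffice for Theorem~\ref{thm:mst-doubling}, but it proves a weaker statement than the one claimed. Your fallback route does not recover the full range either, because the arithmetic is off. The condition $(1-\epsilon)^2>\epsilon(1+2\epsilon)$ is equivalent to $\epsilon^2+3\epsilon-1<0$, that is, $\epsilon<(\sqrt{13}-3)/2\approx 0.30$. This is worse than $1/3$, not about $0.38$.

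The missing observation is one you had already written down: $v'\in\Vertices_u$ as well, with $v'=u$ allowed. Applying stability of $(u,v)$ to $y=v'$ instead of $y=u'$ gives
\[
w\le\Metric(v,v')\le\Metric(v,b)+\Metric(b,v')\le\epsilon w+\epsilon w'.
\]
Adding this to your first inequality yields $(1-2\epsilon)(w+w')\le 0$, which contradicts $w+w'>0$ for every $\epsilon<1/2$. This is exactly the paper's proof. It compares the two $a$-side endpoints $u',u$ through $a$, and the two $b$-side endpoints $v',v$ through $b$. Each detour then uses only the $\epsilon$-closeness bounds from Lemma~\ref{lem:single-edge}, never the length of a tree edge.
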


\begin{proof}
Suppose two distinct tree edges are assigned to the
same unordered spanner edge $\{a,b\}$.
Because this spanner edge crosses both corresponding
cuts, both tree edges lie on the unique $a$--$b$ path
in $T$.

Orient both tree edges along this path and name them
$(u,v)$ and $(u',v')$ in their order from $a$ to $b$.
Thus the path visits
\[
a,\ldots,u,v,\ldots,u',v',\ldots,b,
\]
where $v=u'$ is possible.
Write $w=\Metric(u,v)>0$ and
$w'=\Metric(u',v')>0$.
The bounds from Lemma~\ref{lem:single-edge}, with
endpoints oriented toward $a$ and $b$, give
\[
\Metric(u,a)\le\epsilon w,
\qquad
\Metric(v,b)\le\epsilon w,
\]
and
\[
\Metric(u',a)\le\epsilon w',
\qquad
\Metric(v',b)\le\epsilon w'.
\]

Removing $(u,v)$ places $u'$ on the side containing $v$.
Endpoint cut stability at $(u,v)$ therefore gives
\[
w\le\Metric(u,u')
\le\Metric(u,a)+\Metric(a,u')
\le\epsilon(w+w').
\]
Similarly, removing $(u',v')$ places $v$ on the side
containing $u'$, opposite $v'$.
Endpoint cut stability at $(u',v')$ gives
\[
w'\le\Metric(v',v)
\le\Metric(v',b)+\Metric(b,v)
\le\epsilon(w+w').
\]
Adding these inequalities yields
\[
w+w'\le2\epsilon(w+w'),
\]
contradicting $w+w'>0$ and $\epsilon<1/2$. \qed
\end{proof}

The two lemmas bound the tree's cost by
$1/(1-\epsilon)$ times the total weight of the spanner.
Together with the light-spanner bound, this completes
the proof of Theorem~\ref{thm:mst-doubling}.

\mstdoubling*

\begin{proof}
Let $T$ be the tree returned by
Algorithm~\ref{alg:mst-doubling}, and fix any
$\Metric\in\ConsMetrics{\Profile}$ of doubling
dimension $\Dim$.
By Lemma~\ref{lem:ecs-construction}, $T$ is endpoint
cut-stable with respect to $\Metric$.

Set $\epsilon=1/4$ and choose the corresponding
light spanner $\Spanner$.
By Lemmas~\ref{lem:single-edge} and~\ref{lem:injective},
each positive-weight edge of $T$ is assigned to a
distinct edge of $\Spanner$ whose weight is at least
$1-\epsilon$ times its own. Summing gives
\[
\cost_{\Metric}(T)
\le
\frac{\cost_{\Metric}(\Spanner)}{1-\epsilon}
\le
\frac{\epsilon^{-O(\Dim)}}{1-\epsilon}
\cost_{\Metric}(\MST_G)
\le
2^{O(\Dim)}\cost_{\Metric}(\MST_G).
\]
Since this holds for every consistent metric of
doubling dimension $\Dim$, the algorithm has
distortion $2^{O(\Dim)}$ on this class. \qed
\end{proof}

%% file: Files/Query.tex
\section{General Metrics With Edge-Comparison Queries}
\label{sec:mst-query}

The preceding sections use structural assumptions, such as tree metrics
and bounded doubling dimension, to obtain distortion guarantees independent
of $\NumVertices$. We now return to general metrics and ask how much additional
information is sufficient to approach constant distortion. We show that a
linear number of edge-comparison queries suffices to obtain distortion
$O(\log^{\ast} \NumVertices)$.

An edge-comparison query takes two edges $(\firstV,\secondV)$ and $(\thirdV,\fourthV)$ and reveals
whether
\[
    \Metric(\firstV,\secondV)\leq \Metric(\thirdV,\fourthV).
\]
It does not reveal either numerical distance. Comparisons between edges
sharing an endpoint can be ordered consistently with their lengths
using the ordinal profile. The
additional queries allow us to compare candidates with different endpoints.

Algorithm~\ref{alg:mst-query} applies a prescribed number of exact
Bor\r{u}vka rounds to successively smaller representative sets.
To describe this construction, we first explain how to perform one exact
Bor\r{u}vka round on a representative set $R\subseteq\Vertices$.
We use the globally best outside vertex we had, restricted
to the currently available vertices:
\[
    \GB^{R}_{\Component}(\firstV)
    =\argmin_{\secondV\in R\setminus\Component}
       \rank_{\firstV}(\secondV),
    \qquad \firstV\in\Component\subsetneq R.
\]

\begin{lemma}
\label{lem:query-exact-round}
Let $\mathcal{C}$ be the components of a forest on $R$, with
$c=|\mathcal{C}|>1$. One exact Bor\r{u}vka round can be implemented using
at most $|R|-c$ edge-comparison queries. The number of components after
the round is at most $\lfloor c/2\rfloor$.
\end{lemma}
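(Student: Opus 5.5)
We implement the round in two stages: a query-free stage inside each component, and a query stage that compares candidates across vertices.

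\emph{Stage 1 (no queries).} For each component $\Component\in\mathcal{C}$ and each vertex $\firstV\in\Component$, we read off $\GB^{R}_{\Component}(\firstV)$ from the ranking $\pi_{\firstV}$. Consistency gives
\[
\Metric\bigl(\firstV,\GB^{R}_{\Component}(\firstV)\bigr)\le \Metric(\firstV,\secondV)
\qquad\text{for every }\secondV\in R\setminus\Component .
\]
So the cheapest edge leaving $\Component$ within $R$ is one of the $|\Component|$ candidate edges $\bigl(\firstV,\GB^{R}_{\Component}(\firstV)\bigr)$, $\firstV\in\Component$. Here we use that ties are harmless: any minimum-weight outgoing edge is acceptable for an exact Bor\r{u}vka step.

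\emph{Stage 2 (queries).} To find the minimum among the $|\Component|$ candidates, we run a sequential tournament. We keep a current champion and compare it to each remaining candidate with one edge-comparison query. This costs exactly $|\Component|-1$ queries. If two candidates happen to share an endpoint, the comparison can be taken from the profile instead, which only reduces the count. Summing over components gives
\[
\sum_{\Component\in\mathcal{C}}\bigl(|\Component|-1\bigr)=|R|-c
\]
queries, matching the bound in the statement. Every component $\Component$ now proposes an edge $e_{\Component}$ of minimum weight leaving $\Component$ in the metric restricted to $R$.

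\emph{Contraction and acyclicity (main obstacle).} The delicate point is that adding all proposals at once can create a cycle when there are ties, and we must still guarantee halving. We follow Lemma~\ref{lem:mst-round}: we select a maximal subset $A$ of the proposed edges such that $F\cup A$ is acyclic. To keep the round exact, meaning consistent with some MST of the current contracted instance, we would break ties by a fixed label order so that all comparisons are resolved under one strict total order. The usual Bor\r{u}vka argument then shows the proposals form a forest, so $A$ is the full set. Even if we only rely on maximality, the argument of Lemma~\ref{lem:mst-round} suffices for the component count. An old component left isolated in the graph of selected edges could have its own proposal added without creating a cycle, which contradicts maximality. Hence every new component merges at least two old ones, and the number of components is at most $c/2$. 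Since the count is an integer, it is at most $\lfloor c/2\rfloor$.
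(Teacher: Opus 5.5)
Your main line is the paper's proof. You read candidates $(i,\GB^{R}_{\Component}(i))$ from the profile, run a sequential tournament costing $|\Component|-1$ queries per component (summing to $|R|-c$), and reuse the maximality argument of Lemma~\ref{lem:mst-round} for halving.

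The one part to delete is the label-order tie-breaking aside, because its claim that the proposals then form a forest with $A=E$ is false for this algorithm. There are two reasons:
\begin{itemize}
\item Each vertex's candidate is fixed by the profile's own tie-breaking, which can be cyclic. Take $d(a,b)=d(b,c)=d(c,a)=1$ with rankings $a\colon b\succ c$, $b\colon c\succ a$, $c\colon a\succ b$. The first round on singletons uses no queries and proposes the triangle $(a,b),(b,c),(c,a)$, so no consistent strict total order exists to break ties against.
\item In later rounds, a single query $d(e)\le d(e')$ cannot tell a tie from a strict inequality. Detecting ties would therefore need additional queries beyond the $|R|-c$ budget.
\end{itemize}

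Nothing is lost by dropping the aside. Your halving argument already relies only on maximality. Exactness here only requires that each component propose a minimum-length outgoing edge, computed from the start-of-round partition, and that a maximal acyclic subset be added. A cut-property argument shows that such a forest still extends to an \MST{} even when ties occur.
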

\begin{proof}
For each component $\Component\in\mathcal{C}$, consider
\[
    E_{\Component}
    =\bigl\{(\firstV,\GB^{R}_{\Component}(\firstV)):
              \firstV\in\Component\bigr\}.
\]
For every $\firstV\in\Component$ and
$\secondV\in R\setminus\Component$, consistency implies
\[
    \Metric(\firstV,\GB^{R}_{\Component}(\firstV))
    \leq\Metric(\firstV,\secondV).
\]
Thus a minimum-length edge leaving $\Component$ can be found among
$E_{\Component}$. A tournament selects such an edge with at most
$|\Component|-1$ comparisons. As in the upper-bound algorithm, write
this proposal as $(r(\Component),g_{\Component})$, where
$g_{\Component}=\GB^{R}_{\Component}(r(\Component))$.
Here $r(\Component)$ is the endpoint of the minimum candidate selected
by the tournament.

All proposals are computed using the partition at the beginning of the
round. Adding a maximal subset that preserves acyclicity implements an
exact Bor\r{u}vka round. The total number of queries is at most
\[
    \sum_{\Component\in\mathcal{C}}(|\Component|-1)=|R|-c.
\]
The maximality argument in Lemma~\ref{lem:mst-round} applies here as well:
no old component can remain isolated, since its proposal could otherwise
be added without creating a cycle. Every new component therefore contains
at least two old components. \qed
\end{proof}

Applying these rounds until completion gives the classical Bor\r{u}vka
algorithm~\cite{boruuvka1926jistem}. By its cut-property argument, every
intermediate forest extends to an \MST of the graph on which the rounds
are performed. Starting from $R=\Vertices$, at most
$\lceil\log_2\NumVertices\rceil$ rounds are needed, each using at most
$\NumVertices$ comparisons. Consequently, this implementation computes
an exact \MST using $O(\NumVertices\log\NumVertices)$ edge-comparison queries.

To reduce the number of queries to $O(\NumVertices)$, we limit the number
of rounds before restricting the computation to a smaller representative
set. Define the round budgets by
\begin{equation}
    \Budget_0=1,\qquad \Budget_{\ell+1}=2^{\Budget_\ell}\quad(\ell\geq0).
    \label{eq:query-budgets}
\end{equation}
In outer iteration $\ell$, run at most $\Budget_\ell$ exact Bor\r{u}vka rounds
on $R_\ell$, stopping early if the forest $F_\ell$ becomes connected.
Then retain one representative from each component of $(R_\ell,F_\ell)$
to obtain $R_{\ell+1}$. Representatives are selected by a fixed vertex
order, and comparisons and cycle elimination are processed in fixed
orders, so the algorithm is deterministic.

\begin{algorithm}[H]
\caption{MST with edge-comparison queries}
\label{alg:mst-query}
\begin{algorithmic}[1]
\Require Ordinal profile $\Profile=(\pi_{\firstV})_{\firstV\in\Vertices}$ and edge-comparison query access
\Ensure A spanning tree $F$ on $\Vertices$
\State $F\gets\emptyset$, $R_0\gets\Vertices$, $\ell\gets0$, $\Budget_0\gets1$
\While{$|R_{\ell}|>1$}
    \State $F_{\ell}\gets\emptyset$
    \For{$s=1,\ldots,\Budget_{\ell}$}
        \State $\mathcal{C}\gets$ components of $(R_{\ell},F_{\ell})$
        \If{$|\mathcal{C}|=1$}
            \State \textbf{break}
        \EndIf
        \State $E\gets\emptyset$
        \For{each component $\Component\in\mathcal{C}$}
            \State $E_{\Component}\gets
                \{(\firstV,\GB^{R_\ell}_{\Component}(\firstV)):
                   \firstV\in\Component\}$
            \State Find a minimum-length edge of $E_{\Component}$ by comparisons
            \State $r(\Component)\gets$ its endpoint in $\Component$
            \State $g_{\Component}\gets
                \GB^{R_\ell}_{\Component}(r(\Component))$
            \State $E\gets E\cup\{(r(\Component),g_{\Component})\}$
        \EndFor
        \State Choose, in fixed order, a maximal $A\subseteq E$
        such that $F_\ell\cup A$ is acyclic
        \State $F_\ell\gets F_\ell\cup A$
    \EndFor
    \State $F\gets F\cup F_{\ell}$
    \State $R_{\ell+1}\gets$ one representative, chosen by a fixed vertex order,
    from each component of $(R_{\ell},F_{\ell})$
    \State $\Budget_{\ell+1}\gets2^{\Budget_{\ell}}$, $\ell\gets \ell+1$
\EndWhile
\State \Return $F$
\end{algorithmic}
\end{algorithm}

At the beginning of each outer iteration, every component of the accumulated forest
$F$ contains exactly one vertex of $R_{\ell}$. Consequently, adding $F_{\ell}$
creates no cycle in $F$, and selecting one representative per component
of $(R_{\ell},F_{\ell})$ preserves this invariant. When a single representative
remains, $F$ is connected and is therefore a spanning tree. All distances
used on $R_{\ell}$ are restrictions of the original metric; we do not replace
them by distances in the accumulated forest.

For the analysis, fix an ordinal profile $\Profile$ and a metric
$\Metric\in\ConsMetrics{\Profile}$. We start with a simple lemma bounding
the cost of an \MST on an induced subgraph by the cost of the \MST on the
original vertex set.

\begin{lemma}
\label{lem:query-subset-mst}
For every nonempty $R\subseteq\Vertices$, let $G[R]$ be the complete
metric graph induced by $R$. Then
\[
    \cost_{\Metric}(\MST_{G[R]})
    \leq 2\cost_{\Metric}(\MST_G).
\]
\end{lemma}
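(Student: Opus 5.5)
The plan is the standard doubling-and-shortcutting argument used for metric Steiner trees and TSP. Let $T^\ast=\MST_G$ be a minimum spanning tree of the complete graph on $\Vertices$ with respect to $\Metric$, and let $\OPT=\cost_{\Metric}(T^\ast)$. If $|R|=1$, the induced tree is empty, has cost $0$, and there is nothing to prove. So assume $|R|\ge2$. Our goal is to exhibit a spanning tree of $G[R]$ of cost at most $2\OPT$; the minimum spanning tree of $G[R]$ then costs no more.

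First, double every edge of $T^\ast$. The resulting multigraph is connected and every vertex has even degree, so it has an Euler tour $W$ through all of $\Vertices$, of total length $2\OPT$. Next, list the vertices of $R$ in the order of their first appearance along $W$, say $r_1,r_2,\ldots,r_m$ with $m=|R|$. Consider the path $Q=(r_1,r_2),(r_2,r_3),\ldots,(r_{m-1},r_m)$ in $G[R]$. It visits each vertex of $R$ exactly once, so it is a spanning tree of $G[R]$.

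To bound the cost of $Q$, note that for each $t$, the triangle inequality gives
\[
\Metric(r_t,r_{t+1})\le \text{length of the subwalk of }W\text{ from the first visit of }r_t\text{ to the first visit of }r_{t+1}.
\]
These subwalks are pairwise edge-disjoint portions of $W$, because the first-visit times increase with $t$. Summing over $t$ therefore yields
\[
\cost_{\Metric}(Q)\le\cost(W)=2\OPT,
\]
and consequently
\[
\cost_{\Metric}(\MST_{G[R]})\le\cost_{\Metric}(Q)\le 2\OPT.
\]

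No step here is a real obstacle. The only point requiring care is that the shortcut edges are charged to disjoint segments of the tour, which follows directly from ordering $R$ by first appearance. Zero-length edges, permitted by the relaxed metric, cause no issue, since only the triangle inequality is used.
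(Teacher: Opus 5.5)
Your proposal is correct and follows essentially the same argument as the paper: double $\MST_G$, take an Euler tour, and shortcut to the vertices of $R$ in order of first appearance. The only cosmetic difference is that the paper shortcuts to a closed walk and then deletes the closing edge, whereas you bound the open path directly by charging each shortcut edge to a disjoint segment of the tour.
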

\begin{proof}
The claim is immediate when $|R|=1$. Otherwise, double the edges of
$\MST_G$ and traverse an Euler tour of the resulting multigraph.
Shortcut this walk to visit the vertices of $R$ in their order of first
appearance, returning to the first vertex. By the triangle inequality,
the resulting closed walk has length at most
$2\cost_{\Metric}(\MST_G)$. Deleting its closing edge gives a spanning
path on $R$ of no greater cost. The minimum spanning tree on $R$ is
no more expensive than this path. \qed
\end{proof}

This bound yields Lemma~\ref{lem:query-stage-cost}, which relates the
cost of the output to the number $h$ of outer iterations, each consisting
of a budgeted Bor\r{u}vka computation followed by a reduction to
representatives.

\begin{lemma}
\label{lem:query-stage-cost}
If Algorithm~\ref{alg:mst-query} executes $h\geq1$ outer iterations, then
\[
    \cost_{\Metric}(F)
    \leq(2h-1)\cost_{\Metric}(\MST_G).
\]
\end{lemma}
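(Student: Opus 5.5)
The plan is to bound the cost of each outer iteration separately: the first iteration costs at most $\OPT=\cost_{\Metric}(\MST_G)$, and every later one costs at most $2\OPT$. Summing over $h$ iterations then gives $\OPT+2(h-1)\OPT=(2h-1)\OPT$. Since $F$ is the disjoint union $F_0\cup\cdots\cup F_{h-1}$, this suffices.

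Fix an outer iteration $\ell$. The key step is that the rounds inside the iteration are exact Bor\r{u}vka rounds on the complete metric graph $G[R_\ell]$, by Lemma~\ref{lem:query-exact-round}. In each round, every component of $(R_\ell,F_\ell)$ proposes a minimum-length edge of $G[R_\ell]$ leaving it, and the proposals are computed from the partition at the start of the round. By the cut property, invoked through the classical Bor\r{u}vka argument cited after Lemma~\ref{lem:query-exact-round}, the forest $F_\ell$ after any number of rounds extends to a minimum spanning tree of $G[R_\ell]$. Hence
\[
\cost_{\Metric}(F_\ell)\le\cost_{\Metric}(\MST_{G[R_\ell]}).
\]
Because distances on $R_\ell$ are restrictions of $\Metric$ and are never replaced by forest distances, this is the cost in the original metric. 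With ties the cut property still holds, provided each added edge is a minimum-weight edge across the cut of its proposing component; I would add one sentence justifying why the fixed-order maximal acyclic subset keeps the forest extendable to some MST.

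The per-iteration bounds then follow directly. For $\ell=0$ we have $R_0=\Vertices$, so $\cost_{\Metric}(F_0)\le\OPT$. For $\ell\ge1$, Lemma~\ref{lem:query-subset-mst} gives $\cost_{\Metric}(\MST_{G[R_\ell]})\le2\OPT$. If the loop ends after $h$ iterations, $F=\bigcup_{\ell<h}F_\ell$, and summing yields the claim.

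The main obstacle is stating the extension property carefully with ties. One simultaneous round can add several equal-length edges that together would not be simultaneously safe if chosen carelessly. The standard remedy is to fix a strict tie-breaking order on edges, namely the fixed comparison order used by the algorithm, and argue that the tournament selects the minimum under this refined order. The Bor\r{u}vka forest is then a subforest of the unique MST under the perturbed weights, which is also an MST for $\Metric$. I would try this perturbation argument first. Failing that, I would argue directly that acyclicity of $A$ plus the cut property applied to edges one at a time suffices.
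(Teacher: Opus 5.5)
Your proposal is correct and follows the paper's proof: $F_\ell$ extends to an \MST{} of $G[R_\ell]$, which gives $\cost_{\Metric}(F_0)\le\OPT$ and, via Lemma~\ref{lem:query-subset-mst}, $\cost_{\Metric}(F_\ell)\le 2\,\OPT$ for $\ell\ge1$; summing over the $h$ iterations gives $(2h-1)\OPT$, and the paper takes the extension property directly from the classical Bor\r{u}vka cut-property argument. Your preferred tie-breaking remedy does not work as stated, because the proposals are dictated by the input rankings, which may break ties cyclically (an equilateral triangle with rankings $a\colon b\succ c$, $b\colon c\succ a$, $c\colon a\succ b$ makes $(a,b),(b,c),(c,a)$ the three first-round proposals, and no strict refinement of $\Metric$ makes all three simultaneously minimal), so use your direct route instead: if a selected edge $e$ is missing from an \MST{} $T$ containing the earlier selections, walk along the $T$-path from the component that proposed $e$ to the other component; each already-selected edge on this path was proposed by the component it enters, so edge weights stay at least $\Metric(e)$ up to the first not-yet-selected edge, which exists by acyclicity and can be exchanged for $e$.
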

\begin{proof}
For each $\ell$, the forest $F_\ell$ is constructed by exact Bor\r{u}vka
rounds on $G[R_\ell]$ and therefore extends to an \MST of $G[R_\ell]$.
Hence
\[
    \cost_{\Metric}(F_\ell)
    \leq\cost_{\Metric}(\MST_{G[R_\ell]}).
\]
Since $R_0=\Vertices$, the first forest satisfies
$\cost_{\Metric}(F_0)\leq\cost_{\Metric}(\MST_G)$.
For every $\ell\geq1$, Lemma~\ref{lem:query-subset-mst} gives
$\cost_{\Metric}(F_\ell)\leq2\cost_{\Metric}(\MST_G)$.
Summing over the $h$ outer iterations gives
\[
    \cost_{\Metric}(F)
    \leq\sum_{\ell=0}^{h-1}\cost_{\Metric}(F_\ell)
    \leq\bigl(1+2(h-1)\bigr)\cost_{\Metric}(\MST_G).
\]
\qed
\end{proof}

We next bound how many times the budgeted computation is repeated.
Every exact Bor\r{u}vka round halves the number of components, and replacing
each component by a representative preserves this reduction. Thus a total
of $\lceil\log_2\NumVertices\rceil$ completed rounds is sufficient for
termination. The budget recurrence reaches this total within
$\log_2^*\NumVertices$ outer iterations, as shown next. Recall that
$
    \log_2^*\NumVertices
    =\min\{q\geq0:\log_2^{(q)}\NumVertices\leq1\},
$
where $\log_2^{(0)}\NumVertices=\NumVertices$.

\begin{lemma}
\label{lem:query-iteration-count}
For $\NumVertices\geq2$, Algorithm~\ref{alg:mst-query} executes at most
$\log_2^*\NumVertices$ outer iterations.
\end{lemma}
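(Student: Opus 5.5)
We track the number of components. Let $c_\ell=|R_\ell|$ denote the number of representatives at the start of outer iteration $\ell$, so $c_0=\NumVertices$. By Lemma~\ref{lem:query-exact-round}, each completed exact Bor\r{u}vka round at least halves the number of components. Iteration $\ell$ either stops early because $F_\ell$ becomes connected, in which case $|R_{\ell+1}|=1$ and the loop ends, or it completes all $\Budget_\ell$ rounds. In the latter case $c_{\ell+1}\le c_\ell/2^{\Budget_\ell}$, since $R_{\ell+1}$ has exactly one vertex per component of $(R_\ell,F_\ell)$. Iterating, if iterations $0,\dots,\ell-1$ all run fully, then
\[
c_\ell\le \NumVertices\,2^{-(\Budget_0+\cdots+\Budget_{\ell-1})}\le \NumVertices\,2^{-\Budget_{\ell-1}}.
\]
It is cleaner, however, to compare with iterated logarithms directly.

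The key claim, proved by induction on $\ell$, is that whenever iteration $\ell$ is executed, $\log_2 c_\ell\le \log_2^{(\ell)}\NumVertices$ up to the accumulated budget. Since the budget tower and $\log^*$ are inverse to each other, a cleaner invariant is the following. Define $t_\ell=\Budget_0+\dots+\Budget_{\ell-1}$, the total number of rounds completed before iteration $\ell$ if none stopped early. Since $c_\ell\ge 2$ is needed for iteration $\ell$ to execute, we have $2\le c_\ell\le \NumVertices/2^{t_\ell}$, so $t_\ell<\log_2\NumVertices$. Moreover, iteration $\ell$ is the last one whenever $\Budget_\ell\ge \log_2 c_\ell$, because $\lceil\log_2 c_\ell\rceil$ rounds suffice to reach one component; note that $c_\ell$ is an integer and each round maps $c$ to at most $\lfloor c/2\rfloor$. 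So it suffices to show that for $q=\log_2^*\NumVertices$ we have $\Budget_{q-1}\ge\lceil\log_2\NumVertices\rceil$, or at least $\Budget_{q-1}\ge \log_2 c_{q-1}$. Then iteration $q-1$, if reached, finishes with a single component, and at most $q$ iterations occur.

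The main technical step is the tower comparison. Since $\Budget_\ell$ is a tower of $\ell$ twos, we have $\log_2^{(\ell)}\Budget_\ell=1$. The definition of $q$ gives $\log_2^{(q-1)}\NumVertices>1$ but $\log_2^{(q)}\NumVertices\le 1$, so $\log_2^{(q-1)}\NumVertices\le 2=\Budget_1$. Applying the monotone map $x\mapsto 2^x$ a total of $q-2$ times yields $\log_2\NumVertices\le \Budget_{q-1}$, which handles $q\ge2$. Because $\Budget_{q-1}$ is an integer, this gives $\lceil\log_2 \NumVertices\rceil\le\Budget_{q-1}$. The case $q=1$, i.e.\ $\NumVertices=2$, is checked directly: $\Budget_0=1$ round merges the two vertices. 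I expect the main obstacle to be the off-by-one bookkeeping here: matching the indexing of $\Budget_\ell$ to $\log_2^{(\cdot)}$, and ensuring the integer rounding ($\lfloor c/2\rfloor$ and the ceiling) is respected, as well as the boundary cases where $\log_2^{(q-1)}\NumVertices$ equals exactly $2$. Using $c_{q-1}\le\NumVertices$ avoids needing any finer accumulation from earlier iterations.
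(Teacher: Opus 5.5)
Your proof is correct and takes essentially the paper's approach: both rest on the tower comparison $\Budget_{q-1}\ge\lceil\log_2\NumVertices\rceil$ for $q=\log_2^*\NumVertices$, combined with the fact that each exact Bor\r{u}vka round halves the number of components. The differences are minor. You apply the comparison to iteration $q-1$ alone via $|R_{q-1}|\le\NumVertices$, while the paper uses the cumulative bound $|R_\ell|\le\NumVertices/2^{S_\ell}$ with $S_q\ge\Budget_{q-1}$. You also spell out the iterated-logarithm step, including the case $q=1$, which the paper leaves implicit.
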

\begin{proof}
Let $m_\ell=|R_\ell|$ and
\[
    S_\ell=\sum_{a=0}^{\ell-1}\Budget_a,\qquad S_0=0.
\]
Whenever outer iteration $\ell$ is reached, every preceding iteration
has completed its full budget; otherwise the algorithm would already
have terminated. The halving property therefore implies
\begin{equation}
    m_\ell\leq\frac{\NumVertices}{2^{S_\ell}}.
    \label{eq:query-representative-count}
\end{equation}

Put $q=\log_2^*\NumVertices$. By the definition of the iterated logarithm
and Equation~\eqref{eq:query-budgets},
\[
    \NumVertices\leq \Budget_q=2^{\Budget_{q-1}}.
\]
Consequently,
\[
    S_q=\sum_{\ell=0}^{q-1}\Budget_\ell
    \geq \Budget_{q-1}
    \geq\lceil\log_2\NumVertices\rceil.
\]
If the algorithm has not stopped earlier, completing these $q$ budgets
leaves at most $\NumVertices/2^{S_q}\leq1$ representative. Thus
$h\leq q=\log_2^*\NumVertices$. \qed
\end{proof}

It remains to establish the linear query bound. Although the round
budgets increase, each budget is applied to a smaller representative
set. The following lemma shows that the total number of additional
comparisons is less than $\NumVertices$.

\begin{lemma}
\label{lem:query-total}
Algorithm~\ref{alg:mst-query} uses fewer than $\NumVertices$
edge-comparison queries.
\end{lemma}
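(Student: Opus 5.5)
We count queries round by round using Lemma~\ref{lem:query-exact-round}. Each exact Bor\r{u}vka round that starts with $c>1$ components on $R_\ell$ costs at most $|R_\ell|-c$ queries. Rounds that find a single component break before querying, so they cost nothing. Write $m_\ell=|R_\ell|$, and let $c_{\ell,s}$ denote the number of components at the start of round $s$ of iteration $\ell$. The queries of iteration $\ell$ are then at most $\sum_s (m_\ell-c_{\ell,s})$, summed over the rounds actually executed. The plan is to show this is less than $m_\ell$ times a small quantity, and that these quantities decay fast enough that the total is below $\NumVertices$.

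The first step is to bound each iteration by $m_\ell\Budget_\ell$. A cruder but sufficient bound is that every executed round costs at most $m_\ell-2$, since $c\ge2$. Hence iteration $\ell$ costs at most $\Budget_\ell(m_\ell-2)$ when at least one round runs, and $0$ otherwise.

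The second step bounds $m_\ell$ as in the proof of Lemma~\ref{lem:query-iteration-count}. When iteration $\ell$ is reached, all previous budgets were completed, so by \eqref{eq:query-representative-count} we have $m_\ell\le \NumVertices/2^{S_\ell}$. Since $S_\ell\ge \Budget_{\ell-1}$ and $2^{\Budget_{\ell-1}}=\Budget_\ell$, this gives $m_\ell\le \NumVertices/\Budget_\ell$ for $\ell\ge1$. The resulting estimate $\Budget_\ell m_\ell\le \NumVertices$ per iteration is too weak, so we keep the full exponent: $m_\ell\le \NumVertices 2^{-S_\ell}$ with $S_\ell=\Budget_0+\dots+\Budget_{\ell-1}$. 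The total is then at most
\[
\sum_\ell \Budget_\ell\,\NumVertices\,2^{-S_\ell}
=\NumVertices\sum_\ell \Budget_\ell\,2^{-\Budget_{\ell-1}}\,2^{-S_{\ell-1}}
=\NumVertices\sum_\ell 2^{-S_{\ell-1}},
\]
where the last equality uses $\Budget_\ell=2^{\Budget_{\ell-1}}$. For $\ell=0$ the term is simply $\Budget_0 m_0=\NumVertices$, which is already too large. So we cannot use the crude per-round bound and must subtract the $-c$ terms.

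The main obstacle is therefore making the $-c$ savings pay for the slack. Iteration $0$ has one round costing exactly $\NumVertices-\NumVertices=0$ queries, since all components are singletons and $c=\NumVertices$. In general, the first round of every iteration is free, because $R_\ell$ starts with singleton components. Round $s\ge2$ of iteration $\ell$ starts with $c\ge1$, so it costs at most $m_\ell-c_{\ell,s}<m_\ell$. Iteration $\ell\ge1$ thus costs less than $(\Budget_\ell-1)m_\ell\le(\Budget_\ell-1)\NumVertices 2^{-S_\ell}$. The values $(\Budget_\ell,S_\ell)$ are $(2,1)$, $(4,3)$, $(16,7)$, $(2^{16},23)$, and so on. The resulting terms are $\NumVertices/2$, $3\NumVertices/8$, $15\NumVertices/128$, and so on, which sum to more than $\NumVertices$. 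So the bound must be refined further.

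To refine it, note that within an iteration the number of components halves each round, so the round-$s$ cost is at most $m_\ell-c_{\ell,s}$ with $c_{\ell,s}\le m_\ell 2^{-(s-1)}$. More usefully, the count of components that merge satisfies $m_\ell-c_{\ell,s}\le m_\ell-m_{\ell+1}$, since later rounds have fewer components. I would bound the cost by $(m_\ell - c_{\ell,s})$ summed over $s$. Then I would telescope against the decrease of the number of components, charging round $s$'s $m_\ell-c_{\ell,s}$ queries to vertices absorbed before round $s$. The goal is to show the per-iteration total is at most $\Budget_\ell(m_\ell-m_{\ell+1})$ restricted appropriately, and then to use $m_{\ell+1}\le m_\ell 2^{-\Budget_\ell}$ to obtain a strictly geometric total below $\NumVertices$. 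I expect pinning down the exact constants here, and checking small $\NumVertices$, to be the delicate part.
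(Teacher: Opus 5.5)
Your proof has a genuine gap, but it comes from an arithmetic slip rather than a missing idea. You had already established everything needed: $Q_0=0$ and $Q_\ell\le(B_\ell-1)m_\ell\le(B_\ell-1)\,n\,2^{-S_\ell}$ for $\ell\ge1$, from the observation that the first round of each iteration is free. You then assert that the terms $n/2$, $3n/8$, $15n/128,\dots$ sum to more than $n$. They do not. Indeed $1/2+3/8+15/128=127/128$, and the next term, $(2^{16}-1)/2^{23}$, is just below $1/128$. In general, the identity $B_\ell 2^{-S_\ell}=2^{-S_{\ell-1}}$, which you derived yourself when summing the crude bound, gives
\[
(B_\ell-1)\,2^{-S_\ell}=2^{-S_{\ell-1}}-2^{-S_\ell},
\qquad\text{so}\qquad
\sum_{\ell=1}^{h-1}Q_\ell\le n\bigl(2^{-S_0}-2^{-S_{h-1}}\bigr)=n\bigl(1-2^{-S_{h-1}}\bigr)<n,
\]
using $S_0=0$. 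The cases $n=1$ and $h=1$ give $Q=0<n$ directly. This one-line telescoping is exactly the paper's proof.

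As written, however, your proposal discards this correct bound. It ends with a refinement that is never carried out: the per-iteration bound ``$B_\ell(m_\ell-m_{\ell+1})$ restricted appropriately'' is neither stated precisely nor proved, so no complete argument is given. The refinement would not help anyway. Since $(B_\ell-1)(m_\ell-m_{\ell+1})\le(B_\ell-1)m_\ell$, in the worst case it gives nothing beyond what you already had. Also, the inequality $c_{\ell,s}\le m_\ell2^{-(s-1)}$ bounds $m_\ell-c_{\ell,s}$ from below, not from above, so it cannot cap the query count. Drop the last paragraph and correct the summation, and the proof is complete.
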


\begin{proof}
The claim is immediate for $\NumVertices=1$.
Assume $\NumVertices\geq2$, and let $h$ be the number of outer
iterations. Retain the notation $m_\ell$ and $S_\ell$ from the
preceding proof, and let $Q_\ell$ denote the number of queries
in outer iteration $\ell$.

Since $F_\ell$ is initially empty, every component in the first
round is a singleton with exactly one candidate edge. This round
therefore requires no queries. By Lemma~\ref{lem:query-exact-round},
each remaining round uses at most $m_\ell$ queries. Hence
\[
    Q_\ell\leq m_\ell(\Budget_\ell-1),
\]
including when the iteration stops early.
In particular, $Q_0=0$ because $\Budget_0=1$.

For every executed iteration $\ell\geq1$,
Equation~\eqref{eq:query-representative-count} gives
\[
\begin{aligned}
    Q_\ell
    &\leq \frac{\NumVertices(\Budget_\ell-1)}{2^{S_\ell}}=\NumVertices\left(
        \frac{1}{2^{S_{\ell-1}}}
        -\frac{1}{2^{S_\ell}}
      \right),
\end{aligned}
\]
where the equality uses
$\Budget_\ell=2^{\Budget_{\ell-1}}$ and
$S_\ell=S_{\ell-1}+\Budget_{\ell-1}$.

If $h=1$, then $Q=Q_0=0<\NumVertices$.
Otherwise, summing and using $S_0=0$ yields
\[
\begin{aligned}
    Q
    &=\sum_{\ell=1}^{h-1}Q_\ell\\
    &\leq\NumVertices
      \sum_{\ell=1}^{h-1}
      \left(
        \frac{1}{2^{S_{\ell-1}}}
        -\frac{1}{2^{S_\ell}}
      \right)\\
    &=\NumVertices\left(1-\frac{1}{2^{S_{h-1}}}\right)
    <\NumVertices.
\end{aligned}
\]
\qed
\end{proof}

Combining Lemmas~\ref{lem:query-stage-cost},
\ref{lem:query-iteration-count}, and~\ref{lem:query-total}, we obtain,
for every $\NumVertices\geq2$,
\[
    Q<\NumVertices,
    \qquad
    \cost_{\Metric}(F)
    \leq(2\log_2^*\NumVertices-1)\cost_{\Metric}(\MST_G).
\]
And also for $\NumVertices=1$ is trivial and the empty tree is optimal and requires no queries.
This proves the previously stated theorem.

\mstquery*

%% file: Files/Conclusion.tex
\section{Conclusion}
We established lower and upper bounds for ordinal \MST{} distortion
and explored how structural restrictions and additional information
improve the guarantees. Our algorithms achieve logarithmic distortion
on general metrics, exact optimality on tree metrics realized on the
given vertex set, and constant distortion for fixed doubling dimension
through endpoint cut stability. A linear number of edge-comparison
queries further gives near-constant distortion on general metrics.

The most immediate open question is to close the gap between the
general upper and lower bounds, through a stronger lower bound or
a sharper ordinal algorithm. In the query model, an interesting
direction is to achieve constant distortion with only
$o(\NumVertices)$ comparisons. It also remains to determine whether
randomization improves the guarantees, using new algorithms or
stronger lower-bound constructions.

For doubling metrics, we leave open whether the $2^{O(\Dim)}$ bound
is tight for endpoint cut-stable trees, whether another ordinal
algorithm can improve its dependence on $\Dim$, and whether a
matching ordinal lower bound exists. More broadly, further metric
restrictions and structural properties of ordinal profiles may
permit stronger guarantees, potentially bridging the gap between
logarithmic distortion on general metrics and exact optimality
on tree metrics.